\newcommand{\idmat}{\mbox{$1 \hspace{-1.3 mm} 1$}}
\newcommand{\R}{\mathbb{R}}
\newcommand{\be}{\begin{equation}}
\newcommand{\ee}{\end{equation}}
\newtheorem{trm}{Theorem}
\newtheorem{lem}[trm]{Lemma}
\newtheorem{prop}[trm]{Proposition}
\title{Single-cone real-space finite difference scheme for the time-dependent Dirac equation}
\author[KFU-Graz]{Ren\'{e} Hammer}
\ead{rene.hammer@uni-graz.at}
\author[KFU-Graz]{Walter P\"{o}tz\corref{cor1}}
\ead{walter.poetz@uni-graz.at}
\author[TU-Wien]{and Anton Arnold}
\ead{anton.arnold@tuwien.ac.at}
\address[KFU-Graz]{Institut f\"{u}r Physik, Karl-Franzens-Universit\"{a}t Graz, Universit\"{a}tsplatz 5, 8010 Graz, Austria}
\address[TU-Wien]{Institut f\"{u}r Analysis und Scientific Computing, TU-Wien, Wiedner Hauptstr. 8, 1040 Wien, Austria}
\begin{document}

\begin{abstract}
A finite difference scheme for the numerical treatment of the (3+1)D Dirac equation is presented.  Its staggered-grid intertwined  discretization treats space and time coordinates on equal footing, thereby avoiding the notorious fermion doubling problem. This explicit scheme  operates entirely in real space and leads to optimal linear scaling behavior for the computational effort per space-time grid-point.  It allows for an easy and efficient parallelization.  A functional for a norm on the grid is identified.  It can be interpreted as probability density and is proved to be conserved by the scheme. The single-cone dispersion relation is shown and exact  stability conditions are derived.  Finally, a single-cone scheme for the two-component (2+1)D Dirac equation, its properties, and a simulation of scattering at a Klein step  are presented.  
\end{abstract}
\begin{keyword}
Dirac equation \sep leap-frog \sep staggered grid \sep fermion doubling \sep FDTD, Klein step
\end{keyword} 

\maketitle

\section{Introduction}

\subsection{The Dirac Equation and Numerical Schemes}

Even more than $80$ years after its presentation by P. A. M. Dirac in 1928, the Dirac equation has not lost its fascination and significance in physics \cite{dirac}. The Dirac equation (in quantized form) has been of fundamental importance to the development of modern field theories and many-particle physics \cite{sakurai,ryder,itzykson,srednicki}. 
 Interestingly, even today  analytic solutions are very rare\footnote{E.g., spin-$1/2$ particle in a homogeneous magnetic field \cite{rabi}, the Dirac oscillator \cite{nikolsky}, an electromagnetic plane wave \cite{wolkov}, or the Coulomb potential \cite{bethe}.} \cite{rabi,nikolsky,wolkov,bethe}. In general, solutions have to be obtained numerically.  They have gained in relevance by rapidly increasing computational resources, as well as the development of efficient numerical schemes. 
In a single particle picture, the (3+1)D Dirac equation is applicable whenever the external electromagnetic fields are strong enough to accelerate a spin-$1/2$ particle to relativistic speeds, but many-particle effects and electron-positron pair production can be neglected \cite{salamin,piazza}. This regime is reached in the study of light-matter interaction  with the availability of short-pulse laser light in the (sub-)femtosecond range and intensities in excess of $10^{18}$W/cm$^2$, which corresponds to the relativistic threshold\footnote{The electron is accelerated to relativistic speeds during one laser cycle.} \cite{salamin,piazza}. 
Much of the physics in strong laser fields has been understood within a classical treatment of the relativistic electron. More recently, a numerical treatment of the quantum wave packet dynamics has become feasible \cite{mocken,gourdeau}.  For an electron in a plane wave field, a wave-packet description of an electron reveals a drift of the wave packet in the direction of light propagation along with its spreading and shearing \cite{mocken}. For this investigation a (2+1)D FFT-split-operator code was used. In such an approach, the propagation induced by the momentum part of the Hamiltonian is computed in momentum space, and the remainder in real-space, using fast Fourier transformation between the two representations \cite{mocken}. The computational effort scales like $\mathcal{O}(N \ln N)$ where $N$ is the number of grid-points. An efficient code using operator splitting in real space, in combination with the exact characteristic solutions in each space direction, was introduced for the (3+1)D case of the 
Dirac equation recently \cite{gourdeau}.  It leads to the highly efficient operations count of $\mathcal{O}(N)$. 

In condensed matter physics, relativistic effects frequently are well accounted for by corrections to the Pauli equation derived from the Dirac equation \cite{sakurai}. 
Recently, however, (topological) condensed matter systems supporting effective Dirac and Majorana fermions have become a  vivid playground for this community \cite{neto,qi,hasan}. In particular, metallic surface states on topological insulator surfaces display 2D Dirac cone dispersion \cite{hsieh, hasan, xia, analytis}.  A dynamic analysis of such Dirac fermions, e.g., in presence of effective electromagnetic fields calls for numerical schemes which faithfully represent the low-energy excitation spectrum.  
We have recently developed and applied such schemes  for the (1+1)D and  (2+1)D effective 2-component Dirac equation.  In the (1+1)D case, we have presented a single-cone lattice scheme for which exactly absorbing open boundaries were derived \cite{hammer1D}.  In (2+1)D, we have first developed a staggered grid scheme with one additional artificial cone which, however, is able to preserve the linear dispersion of the free $m=0$ Dirac spin-$1/2$ particle along $x$- and $y$-axis \cite{hammer2D2cone}. More recently, a staggered grid single-cone scheme was developed for the two-component (2+1)D Dirac equation and used in studies of Dirac fermion dynamics on textured TI surfaces \cite{hammerDW, hammerAPL}. This scheme, its properties,  and its generalization to the four-spinor-component  (3+1)D Dirac equation are  topic of this paper.
It operates entirely in real space and, due to symmetric staggering,  treats space and time on equal terms. This scheme avoids the notorious fermion doubling characteristic of direct discretization of the Dirac derivative operators on a real space lattice \cite{nielsen}. To our knowledge, this is the first multi-D finite difference scheme with this property. This is achieved by redistributing the spinor components on a grid, staggered in space {\it and} time, such that individual spinor components sit on different (adjacent) time sheets.   The proposed scheme, as will be shown below, shows an $\mathcal{O}(N)$ scaling behavior.

\subsection{The Fermion Doubling Problem} 

Real-space finite-difference schemes for the Dirac equation have been plagued by the fermion doubling problem\footnote{It manifests itself in a non-monotonic dispersion relation leading to additional Dirac cones, in addition to the one at $k=0$ of the underlying continuum Hamiltonian. For $d$ discretized spatial dimensions one ends up with up to $2^d$ cones (``fermion flavors").}. It was shown rigorously by Nielsen and Ninomiya that the discretization of the Dirac equation on a regular space grid forbids a single chirally invariant fermion flavor without breaking one or more of the following assumptions: translational invariance, locality, and Hermiticity \cite{nielsen}.  Obtaining additional  spurious solutions due to discretization is not a problem specific to the Dirac equation, but can occur in all the cases where one discretizes a first derivative operator on a grid. Standard symmetric finite-difference approximations for the first derivatives are applied to preserve Hermiticity. They leave out the 
central grid point which, in turn, can take on arbitrary values without changing the specific value which the finite difference expression yields. For example, both a constant function and a function oscillating with the maximum frequency which can be resolved on the grid lead to zero for the central finite difference expression. 
Already the simplest model, the advection equation in 1D shows a non-monotonic behavior with a second minimum in the dispersion when a central approximation for the first spatial derivative is used (e.g., the forward-time central-space (FTCS) method).  In other words, a symmetric first derivative in space utilizes twice the lattice spacing of the underlying grid thereby, in the language of solid state physics, shrinking the effective Brillouin zone in this direction by a factor of two.

Fermion doubling on a grid can be avoided by following basically two main strategies: (i) the incorporation or (ii) the complete 
avoidance of the central grid point in the scheme. As to the former, the use of a one-sided finite difference operator leads to so-called upwind schemes. For the Dirac equation this seems to work only in 1D, when a unitary time-evolution (conserving the norm) is to be maintained \cite{stacey}.  In this particular case it is equivalent to distributing the spinor components over a staggered grid \cite{stacey, hammer1D}. Unfortunately in higher dimensions this strategy either leads to non-Hermiticity and a non-unitary scheme or again to the introduction of fermion doubling.
We note that the claim for a recently presented coordinate space operator splitting scheme was not troubled by fermion doubling could not be verified by us \cite{gourdeau}.  In contrast, we find that monotonic dispersion behavior is limited to $K=c\Delta_t/\Delta_x \leq 1/2$ for the (2+1)D and (3+1)D case within this scheme. However, in its present form it can only be executed for $K\in \mathbb{N}$, because characteristic solutions for advection equations, e.g. $f(x,t + \Delta_t) = f(x + K \Delta_x,t)$, are used. It might be possible to repair the scheme by using the reservoir technique, where the main idea is to wait $n$ time steps before updating the solution, which then allows one to use a smaller $\tilde{\Delta}_t = \Delta_t/n$ \cite{lorin}.

The presence of the central grid point in the scheme can also be enforced by adding artificial terms to the scheme, such as the momentum-dependent mass term suggested by Wilson, which lifts (splits) the spurious cones  at high momentum, but retains a $k=0$ cone in good approximation of the continuum dispersion \cite{wilson}. This strategy comes with its price.  In case of the Dirac equation,  it spoils chiral invariance  for the physical mass $m=0$ case\footnote{For $m\neq0$  chiral symmetry already 
is broken in the continuum problem.} \cite{wilson}.   Alternatively, one can violate one of the other premises of the Nielsen-Ninomiya no-go theorem, for example, by breaking translational invariance in an extra dimension. This leads to domain-wall fermions \cite{kaplan,nielsen}. Interestingly, nature uses this trick in topological insulators \cite{qi}. 

The other option is to retain the centered approximation and to ensure that the disturbing central grid point cannot cause a problem: get rid of the point -  get rid of the problem. This is achieved by the use of a staggered ("checkerboard")  grid  whereby the grid is subdivided into subsets, with a particular spinor component defined on one of them, but not on the other(s). For the Dirac equation the idea of a staggered grid may be tracked back to Kogut and Susskind in the context of lattice QCD \cite{susskind}.
Here we utilize a staggered grid in space and time which treats time and space on equal footing, in accordance with the covariance of the underlying continuum model. 
Given initial conditions on a time slice, time propagation is executed in a leap-frog manner where the upper spinor components are computed from the lower ones and vice versa in an explicit recursive scheme.  

The paper is organized as follows.  In Section \ref{sect2} we present the 
(3+1)D scheme and proceed with a discussion of its properties, such as norm conservation, stability, and the scaling behavior.  In Section \ref{sect3} we present this analysis for the corresponding scheme applied to (2+1)D. The gauge-invariant inclusion of an external electromagnetic potential and its consequences are outlined in Section \ref{sect4}. In Section \ref{sect5} we use the (2+1)D scheme to show the scattering of an initial Gaussian wave packet at a Klein step. Summary and conclusions are given in Section \ref{suco}.


\section{Numerical Scheme for the (3+1)D Dirac Equation}\label{sect2}
\begin{figure}[t!]
\centering
\includegraphics[width=17cm]{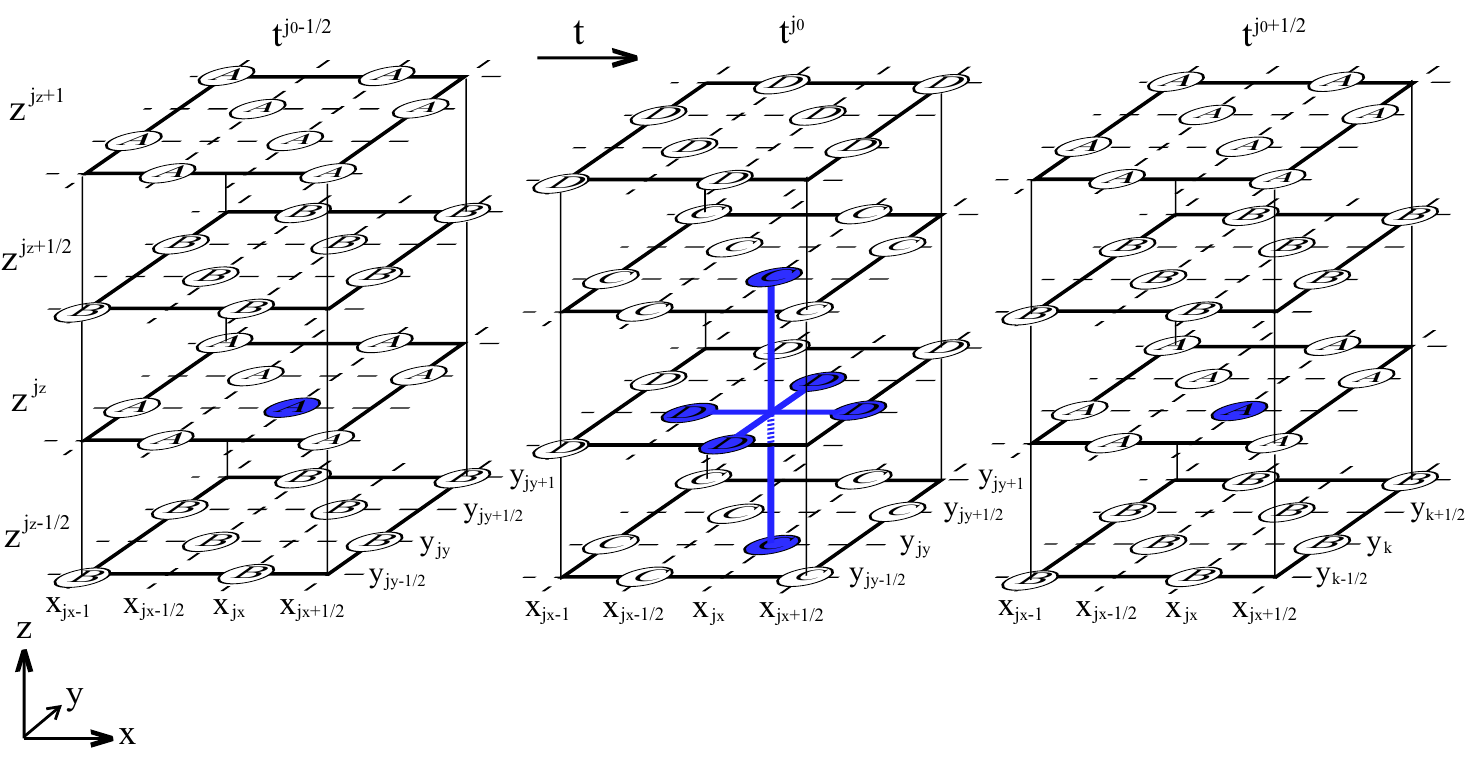}
\caption{(color online). Space-time stencil for the first sub-step with explicit leap-frog time stepping for the space-time staggered finite difference scheme of the (3+1)D Dirac equation: The spinor components $A$, $B$, $C$, and $D$ are defined on special positions of the space-time grid in order to allow a centered approximation of all first derivatives without inducing fermion doubling.  The time propagation is sketched  in horizontal direction.  The z-axis is given the vertical direction.  In the first step the two spinor components $A$ and $B$, initially placed on time sheet $t^{j_o-1/2}$, are propagated to time sheet  $t^{j_o+1/2}$, using components $C$ and $D$ located on time sheet $t^{j_o}$.  For component $A$,  the spinor components entering the centered symmetric difference quotients representing the partial derivatives  are indicated in dark (blue) color.  In the second half-step components $C$ and $D$ are propagated  from time sheet $t^{j_o}$ to time sheet  $t^{j_o+1}$ (not shown in the figure) 
using centered differences of components $A$ and  $B$ on time sheet $t^{j_o+1/2}$.  }
\label{scheme3D}
\end{figure}
\noindent The single particle Dirac equation offers a
relativistic description of spin $1/2$ particles capturing their particle-antiparticle and spin degree of freedom. A common representation is the  (3+1)D Schr\"{o}dinger form  \cite{greiner,sakurai}

\begin{equation}
i\hbar c \frac{\partial}{\partial x_o}{\vec \psi}(x_\mu)= \left[-i\hbar c \sum_{j=1,2,3} \alpha_j  \frac{\partial}{\partial x_j} + mc^2 \beta + V(x_\mu)\idmat_4 \right] {\vec \psi}(x_\mu)~.
\label{3+3D}
\end{equation}
Here we use the $4 \times 4$ Dirac matrices in Pauli-Dirac (``standard") form \cite{greiner}
\begin{equation}
\alpha_j  = \sigma_x\otimes \sigma_j = \left(\begin{array}{cc} 0 & \sigma_j \\ \sigma_j & 0 \end{array}\right)  
\end{equation}
and 
\begin{equation}
\beta  = \sigma_z\otimes \idmat_2 = \left(\begin{array}{cc} \idmat_2 & 0\\ 0 & -\idmat_2 \end{array}\right) ~,
\end{equation}
expressed by the Pauli matrices $\sigma_j$ and the $2\times2$ unit matrices $\idmat_2$. The first element in the direct product form accounts for the particle-antiparticle subspace and the second one represents the spin degree of freedom.  As  usual, $(x_\mu)=(x_o=ct, x_1=x,x_2=y,x_3=z)$ is the  space-time four vector 
in Minkowski space.  
\\
Writing the Dirac spinor as
\begin{equation}
 {\vec \psi}(x_\mu) = \left(\begin{array}{c} \psi_1(x_\mu) \\ \psi_2(x_\mu)\\ \psi_3(x_\mu)\\ \psi_4(x_\mu) \end{array}\right)  =\left(\begin{array}{c} A(x_\mu) \\ B(x_\mu)\\ C(x_\mu)\\ D(x_\mu) \end{array}\right) 
\end{equation}
 one observes that, in the standard representation, components $A$ and $B$, as well as $C$ and $D$, decouple.  The remaining couplings are of two types: the mass and potential terms couple the time derivative of a particular component with itself, secondly, 
 the time derivative of each of the upper two components couples to spatial derivatives of the lower two components.
As an illustration we now write out the first line of \eqref{3+3D}:
\begin{equation}
i\hbar c \frac{\partial A}{\partial x_o}= 
-i\hbar c \left[\frac{\partial D}{\partial x_1} -i \frac{\partial D}{\partial x_2} 
+ \frac{\partial C}{\partial x_3} \right] + (m^2c+V)\,A~.
\label{Dirac-A}
\end{equation}

 With $j_o$, $j_x$, $j_y$, and $j_z$ $\in\mathbb{Z}$, we define the time step progression in units  $\Delta_t$, with $\Delta_o=c\Delta_t$,  for the spinor in two steps as shown in Fig. \ref{scheme3D}.  First $A$ and $B$ are propagated from time-sheet $t-\Delta_t/2$ (i.e. index $j_o-1/2$)  to $t+\Delta_t/2$ (i.e. index $j_o+1/2$), followed by the propagation of $C$ and $D$ from time-sheet $t$ ($j_o$) to $t+\Delta_t$ ($j_o+1$).  One needs $x$-,~$y$-, and $z$-derivatives of $C$ and $D$ on time sheet $j_0$, for the former, and of $A$ and $B$ on time sheet $j_o+1/2$, for the latter update.  
 Each spinor component is placed  onto every other time sheet:  $A$ and $B$ are put on half-integer time sheets $j_o\pm1/2$ and $C$ and $D$ are put on integer time sheets, such as  $j_o$ and $j_o+1$.  This allows one to compute symmetric time derivatives with step width $\Delta_t$. If one then puts $A$ on a rectangular spatial grid, $j_x,j_y,j_z$ the implementation of symmetric $x$- and $y$-derivatives of $D$ requires the latter to be put on a face-centered rectangular (fcr) sublattice with grid points $(j_x,j_y+1/2)$, $(j_x+1/2,j_y)$ in the $x$-$y$ planes.  In order to be able, in turn, to provide symmetric $x$-and $y$-derivatives of $A$ when $D$ is updated, the sublattice of $A$ has to be extended from simple rectangular to fcr (adding lattice points $j_x+1/2,j_y+1/2$).
Since it is necessary to compute the $z$-derivative of $C$ to obtain $A$ at the new time sheet, $C$ has to be given at the $j_z+1/2$ sites, sharing its $x$- and $y$-positions with $A$. $A$ and $D$ may share the same $j_z$ grid points. Similarly, $B$ and $D$ share grid points on the $x$- and $y$- axis, while $B$ and $C$ share their grid points along the $z$-axis.  This determines all lattice sites on which individual spinor components need to be defined 
as follows (again with $j_o$, $j_x$, $j_y$ and $j_z$ $\in\mathbb{Z}$)
 \begin{align}
A(x_\mu)& ~\rightarrow~  A^{j_o-1/2,j_z}_{j_x,j_y}, \quad A^{j_o-1/2,j_z}_{j_x+1/2,j_y+1/2}~, \nonumber\\
B(x_\mu)& ~\rightarrow~  B^{j_o- 1/2,j_z+1/2}_{j_x+1/2,j_y}, \quad B^{j_o- 1/2,j_z+1/2}_{j_x,j_y+1/2}~,  \nonumber\\
C(x_\mu)& ~\rightarrow~  C^{j_o,j_z+1/2}_{j_x,j_y}, \quad C^{j_o,j_z+1/2}_{j_x+1/2,j_y+1/2}~, \nonumber\\
D(x_\mu)& ~\rightarrow~  D^{j_o,j_z}_{j_x+1/2,j_y}, \quad  D^{j_o,j_z}_{j_x,j_y+1/2} ~~.
\end{align}

As the attentive reader probably has already recognized, the  construction of the grid above was influenced by the use of the standard representation and the requirement of symmetric derivatives using single-lattice constant discretization with a fully symmetric structure in time and space.   One can summarize: $A$ and $B$ ($C$ and $D$) live on the same time sheets, $A$ and $D$ ($B$ and $C$) on the same $z$-sheets, and $A$ and $C$ ($B$ and $D$) are defined on the same $x-y$ positions respectively.  
Let us discuss the number of grid-points available (occupied and unoccupied) vs. the number of grid-points where a spinor component is actually defined. The number of grid-points altogether (occupied and unoccupied) is $2^4=16$ times the number of grid-points available on a space-time-grid without the half-integer positions. Each component is defined on two times the number of integer positions. Thus we can conclude that $1/8$ of all available grid-points are occupied by one concrete spinor component (e.g. $A$). Remember that for an non-staggered grid, in three spatial dimensions, one obtains $2^3=8$ Dirac cones. Here, as we will see below \ref{sectdispersion}, one has only one Dirac cone.
Due to the staggering of the lattice (here also in time!), all partial derivatives can be performed in centered fashion about the grid points without modification of the "lattice constants".  

With $j_o$, $j_x$, $j_y$, and $j_z$ $\in\mathbb{Z}$ the scheme looks as follows.   For the update of the first spinor component $A$ on the $(j_x,j_y)$ integer grid-points one has (cp.~\eqref{Dirac-A})
\begin{align} 
\frac{A^{j_o+1/2,j_z}_{j_x,j_y} -A^{j_o-1/2,j_z}_{j_x,j_y}}{\Delta_o} +  \frac{C^{j_o,j_z+1/2}_{j_x,j_y} -C^{j_o,j_z-1/2}_{j_x,j_y}}{\Delta_z} 
 + \frac{D^{j_o,j_z}_{j_x+1/2,j_y} -D^{j_o,j_z}_{j_x-1/2,j_y}}{\Delta_x}  \nonumber \\ 
 - i \frac{D^{j_o,j_z}_{j_x,j_y+1/2} -D^{j_o,j_z}_{j_x,j_y-1/2}}{\Delta_y}  
 = \frac{1}{i\hbar c } ( mc^2 + V)^{j_o,j_z}_{j_x,j_y} \frac{A^{j_o+1/2,j_z}_{j_x,j_y} + A^{j_o-1/2,j_z}_{j_x,j_y}}{2} ~.\label{sbegin} 
\end{align}
For the half-integer $(j_x+1/2,j_y+1/2)$ grid-points the update is
\begin{align} 
\frac{A^{j_o+1/2,j_z}_{j_x+1/2,j_y+1/2} -A^{j_o-1/2,j_z}_{j_x+1/2,j_y+1/2}}{\Delta_o} +  \frac{C^{j_o,j_z+1/2}_{j_x+1/2,j_y+1/2} -C^{j_o,j_z-1/2}_{j_x+1/2,j_y+1/2}}{\Delta_z} 
 + \frac{D^{j_o,j_z}_{j_x+1,j_y+1/2} -D^{j_o,j_z}_{j_x,j_y+1/2}}{\Delta_x}  \nonumber \\ 
 - i \frac{D^{j_o,j_z}_{j_x+1/2,j_y+1} -D^{j_o,j_z}_{j_x+1/2,j_y}}{\Delta_y}  
 = \frac{1}{i\hbar c } ( mc^2 + V)^{j_o,j_z}_{j_x+1/2,j_y+1/2} \frac{A^{j_o+1/2,j_z}_{j_x+1/2,j_y+1/2} + A^{j_o-1/2,j_z}_{j_x+1/2,j_y+1/2}}{2}  ~. 
\end{align}
\\
For $B$ one has on the $(j_x+1/2,j_y)$ sub-grid
 \begin{align} 
\frac{B^{j_o+1/2,j_z}_{j_x+1/2,j_y} -B^{j_o-1/2,j_z}_{j_x+1/2,j_y}}{\Delta_o} -  \frac{D^{j_o,j_z+1/2}_{j_x+1/2,j_y} -D^{j_o,j_z-1/2}_{j_x+1/2,j_y}}{\Delta_z} 
 +\frac{C^{j_o,j_z}_{j_x+1,j_y} -C^{j_o,j_z}_{j_x,j_y}}{\Delta_x} \nonumber \\
 + i \frac{C^{j_o,j_z}_{j_x+1/2,j_y+1/2} -C^{j_o,j_z}_{j_x+1/2,j_y-1/2}}{\Delta_y} 
 = \frac{1}{i\hbar c } ( mc^2 + V)^{j_o,j_z}_{j_x+1/2,j_y} \frac{B^{j_o+1/2,j_z}_{j_x+1/2,j_y} + B^{j_o-1/2,j_z}_{j_x+1/2,j_y}}{2} ~,
 \end{align}
and on the $(j_x,j_y+1/2)$ sub-grid one uses
\begin{align} 
\frac{B^{j_o+1/2,j_z}_{j_x,j_y+1/2} -B^{j_o-1/2,j_z}_{j_x,j_y+1/2}}{\Delta_o} -  \frac{D^{j_o,j_z+1/2}_{j_x,j_y+1/2} -D^{j_o,j_z-1/2}_{j_x,j_y+1/2}}{\Delta_z} 
 +\frac{C^{j_o,j_z}_{j_x+1/2,j_y+1/2} -C^{j_o,j_z}_{j_x-1/2,j_y+1/2}}{\Delta_x} \nonumber \\
 + i \frac{C^{j_o,j_z}_{j_x,j_y+1} -C^{j_o,j_z}_{j_x,j_y}}{\Delta_y} 
 = \frac{1}{i\hbar c } ( mc^2 + V)^{j_o,j_z}_{j_x,j_y+1/2} \frac{B^{j_o+1/2,j_z}_{j_x,j_y+1/2} + B^{j_o-1/2,j_z}_{j_x,j_y+1/2}}{2} ~.
 \end{align}
\\
The update for the $C$-components living on $(j_x,j_y)$ writes
\begin{align} 
\frac{C^{j_o+1,j_z}_{j_x,j_y} -C^{j_o,j_z}_{j_x,j_y}}{\Delta_o} +  \frac{A^{j_o+1/2,j_z+1/2}_{j_x,j_y} -A^{j_o+1/2,j_z-1/2}_{j_x,j_y}}{\Delta_z} 
 + \frac{B^{j_o+1/2,j_z}_{j_x+1/2,j_y} -B^{j_o+1/2,j_z}_{j_x-1/2,j_y}}{\Delta_x}  \nonumber \\ 
- i \frac{B^{j_o+1/2,j_z}_{j_x,j_y+1/2} -B^{j_o+1/2,j_z}_{j_x,j_y-1/2}}{\Delta_y} 
 = \frac{1}{i\hbar c } ( -mc^2 + V)^{j_o+1/2,j_z}_{j_x,j_y} \frac{C^{j_o+1,j_z}_{j_x,j_y} + C^{j_o,j_z}_{j_x,j_y}}{2} ~, 
 \end{align}
and for the $C$-component on $(j_x+1/2,j_y+1/2)$
\begin{align} 
\frac{C^{j_o+1,j_z}_{j_x+1/2,j_y+1/2} -C^{j_o,j_z}_{j_x+1/2,j_y+1/2}}{\Delta_o} +  \frac{A^{j_o+1/2,j_z+1/2}_{j_x+1/2,j_y+1/2} -A^{j_o+1/2,j_z-1/2}_{j_x+1/2,j_y+1/2}}{\Delta_z} 
 + \frac{B^{j_o+1/2,j_z}_{j_x+1,j_y+1/2} -B^{j_o+1/2,j_z}_{j_x,j_y+1/2}}{\Delta_x}  \nonumber \\ 
- i \frac{B^{j_o+1/2,j_z}_{j_x+1/2,j_y+1} -B^{j_o+1/2,j_z}_{j_x+1/2,j_y}}{\Delta_y} 
 = \frac{1}{i\hbar c } ( -mc^2 + V)^{j_o+1/2,j_z}_{j_x+1/2,j_y+1/2} \frac{C^{j_o+1,j_z}_{j_x+1/2,j_y+1/2} + C^{j_o,j_z}_{j_x+1/2,j_y+1/2}}{2} ~. 
 \end{align}
\\
Finally, for the $D$ component on the $(j_x+1/2,j_y)$ sub-grid the update is
 \begin{align} 
\frac{D^{j_o+1,j_z}_{j_x+1/2,j_y} -D^{j_o,j_z}_{j_x+1/2,j_y}}{\Delta_o} -  \frac{B^{j_o+1/2,j_z+1/2}_{j_x+1/2,j_y} -B^{j_o+1/2,j_z-1/2}_{j_x+1/2,j_y}}{\Delta_z} 
 + \frac{A^{j_o+1/2,j_z}_{j_x+1,j_y} -A^{j_o+1/2,j_z}_{j_x,j_y}}{\Delta_x}  \nonumber \\ 
+ i \frac{A^{j_o+1/2,j_z}_{j_x+1/2,j_y+1/2} -A^{j_o+1/2,j_z}_{j_x+1/2,j_y-1/2}}{\Delta_y}   = \frac{1}{i\hbar c } ( -mc^2 + V)^{j_o+1/2,j_z}_{j_x+1/2,j_y} \frac{D^{j_o+1,j_z}_{j_x+1/2,j_y} + D^{j_o,j_z}_{j_x+1/2,j_y}}{2}  ~, 
 \end{align}
and, for $(j_x,j_y+1/2)$, 
 \begin{align} 
\frac{D^{j_o+1,j_z}_{j_x,j_y+1/2} -D^{j_o,j_z}_{j_x,j_y+1/2}}{\Delta_o} -  \frac{B^{j_o+1/2,j_z+1/2}_{j_x,j_y+1/2} -B^{j_o+1/2,j_z-1/2}_{j_x,j_y+1/2}}{\Delta_z} 
 + \frac{A^{j_o+1/2,j_z}_{j_x+1/2,j_y+1/2} -A^{j_o+1/2,j_z}_{j_x-1/2,j_y+1/2}}{\Delta_x}  \nonumber \\ 
+ i \frac{A^{j_o+1/2,j_z}_{j_x,j_y+1} -A^{j_o+1/2,j_z}_{j_x,j_y}}{\Delta_y}   = \frac{1}{i\hbar c } ( -mc^2 + V)^{j_o+1/2,j_z}_{j_x,j_y+1/2} \frac{D^{j_o+1,j_z}_{j_x,j_y+1/2} + D^{j_o,j_z}_{j_x,j_y+1/2}}{2}  ~.\label{send} 
 \end{align}
 
A more compact representation of the scheme, in analogy to the continuum form of the Dirac equation \eqref{3+3D} in standard representation,  can be given for the spinor components ${\psi}_k(x_\mu^{(k)})$ as follows
\begin{equation}
i\hbar c D_o{\psi}_k(x_\mu^{(k)})= \sum_{l=1}^{4} \left[\frac{\hbar c}{i} \sum_{j=1,2,3} \alpha_j  D_ j + \left(mc^2  \beta + V(x_\mu^{(k)})\idmat_4 \right) T \right] _{k l}  \psi_l (x_\mu^{(k)})~; ~ x_\mu^{(k)}\in G_k, k=1,2,3,4.
\label{3+3Ddis}
\end{equation}
 Here we use the symmetric difference operators
 $$
 D_ \nu  \psi(x_\mu)= \frac{1}{\Delta_\nu}\left[ \psi(x_\mu+\frac{\Delta_\nu}{2}{\hat e}_\nu)- \psi(x_\mu-\frac{\Delta_\nu}{2}{\hat e}_\nu )\right]; ~ \nu=o,x,y,z
 $$
 and the time-average operator
 $$
 T  \psi(x_\mu)= \frac{1}{2}\left[ \psi(x_\mu+\frac{\Delta_o}{2}{\hat e}_o )+ \psi(x_\mu-\frac{\Delta_o}{2}{\hat e}_o )\right] .
 $$
 ${\hat e}_\nu$ denotes the unit vector along the $\nu$-axis in Minkowski space.  
 The grids $G_k$ are defined as follows
  \begin{align}
G_1: &  ~ (j_o,j_x,j_y,j_z) ~\cup~ (j_o,j_x+1/2,j_y+1/2,j_z)~, \nonumber\\
G_2:  & ~ (j_o,j_x+1/2,j_y,j_z+1/2) ~\cup~ (j_o,j_x,j_y+1/2,j_z+1/2)~,  \nonumber\\
G_3: &   ~ (j_o+1/2,j_x,j_y,j_z+1/2) ~\cup~  (j_o+1/2,j_x+1/2,j_y+1/2,j_z+1/2) ~, \nonumber\\
G_4: &  ~  (j_o+1/2,j_x+1/2,j_y,j_z) ~\cup~ (j_o+1/2,j_x,j_y+1/2,j_z)~, 
\end{align}  
for integer $j_\nu$.


 \subsection{Dispersion relation}\label{sectdispersion}
\begin{figure}[t!]
\centering
\includegraphics[width=15cm]{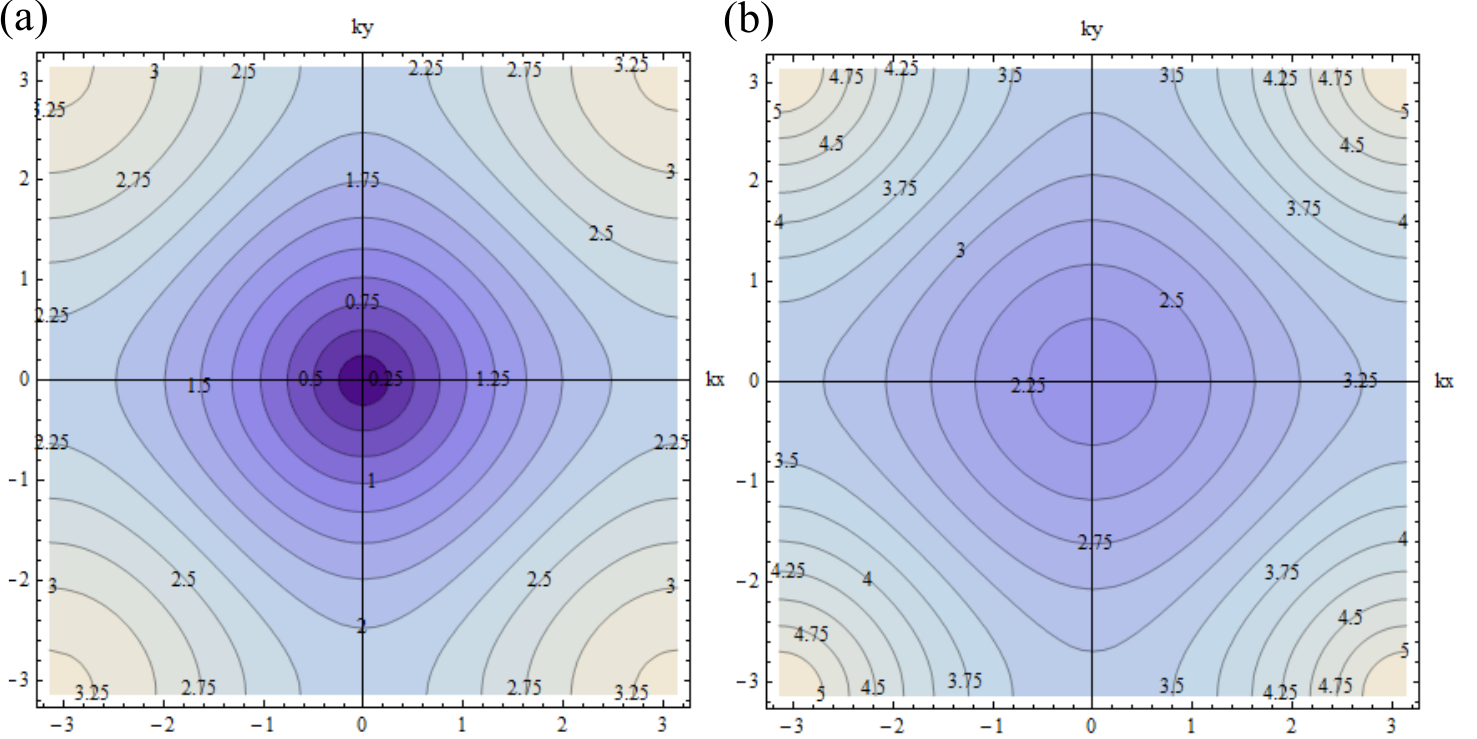}
\caption{(color online). Dispersion of the leap-frog staggered-grid finite difference scheme for the (3+1)D Dirac equation. $\Delta_x = \Delta_y = \Delta_z = \hbar = 1$, $\Delta_o = 1/\sqrt{3}$, $m=0$. (a) $k_z=0$, (b) $k_z=\pi$.}
\label{dispersion3D}
\end{figure}
\noindent There is no fermion doubling for this scheme.  For constant coefficients (potential and mass), this  can be shown by considering the equivalent to continuum plane-wave solutions for frequency $\omega$ and $k$-vector $(k_x,k_y,k_z)$,     
\begin{equation}
 {\vec \psi}_{k_\mu} (x_\mu) = \left(\begin{array}{c} \tilde{A}(k_j) \\ \tilde{B}(k_j)\\ \tilde{C}(k_j)\\ \tilde{D}(k_j)\end{array}\right)  e^{-i\omega t} e^{i \sum_j k_j x_j}~.
\end{equation}
On the lattice, given by the scheme, plane-wave solutions  take the form
\begin{equation}
 {\vec \psi}_{k_\mu} (x_\mu) = \left(\begin{array}{c} \tilde{A}(k_j) e^{i \omega \Delta_t/2} \\ \tilde{B}(k_j) e^{i\omega \Delta_t/2} e^{i k_x \Delta_x/2} e^{i k_z \Delta_z/2}\\ \tilde{C}(k_j) e^{i k_z \Delta_z/2}\\ \tilde{D}(k_j) e^{i k_x \Delta_x/2}\end{array}\right)  e^{-i \omega j_t \Delta_t} e^{i \sum_j k_j j_j \Delta_j} ~,\label{planeonlattice}
\end{equation}
where $\omega = c k_o$, $\Delta_o = c \Delta_t$, and $j_t = j_o$.
\footnote{The relative shift of the sub-grids e.g. in the $x$-$y$-plane can be established by a translation by one-half lattice spacing in $x$- or $y$-direction. In Eq. \eqref{planeonlattice} the translation in $x$-direction is chosen. This does not induce an asymmetry in the scheme. In each of the four equations the common $e$-factor for the center position in space and time can be canceled.  This neither changes the zeros of the characteristic determinant nor the spinor-component ratios of the eigensolutions.}
Insertion into the scheme above translates any discrete $x_o=ct$ derivative into a multiplication by $-\frac{2i}{\Delta_o}\sin(\frac{\omega \Delta_t}{2})$ and every discrete partial position $j=x,y,z$ derivative into a multiplication by a factor $\frac{2i}{\Delta_j} \sin(\frac{k_j \Delta_j}{2})$.  A time average leads to a factor $\cos(\frac{\omega \Delta_t}{2})$. 
\\
For fixed $\Delta_\mu$,  we therefore obtain the dispersion for the lattice from the continuum solution by the substitutions 
\begin{equation}
E=\hbar \omega \rightarrow E(\omega)=\frac{2\hbar }{\Delta_o}\sin(\frac{\omega \Delta_t}{2})~,\label{substE}
\end{equation}
\begin{equation}
p_j=\hbar k_j \rightarrow    P_j(k_j)= \frac{2 \hbar}{\Delta_j} \sin(\frac{k_j\Delta_j}{2})~,\label{substp}
\end{equation}
and 
\begin{equation}
m c^2 \rightarrow m(\omega)c^2 =m c^2 \cos(\frac{\omega \Delta_t}{2})~.\label{substm}
\end{equation}
\\
With these translation rules, the plane-wave solutions on the lattice can be obtained from the continuum solutions \cite{greiner, sakurai}.  In particular, the energy dispersion reads
\\
\begin{equation}
E(\omega)=\pm \sqrt{m(\omega)^2 c^4  + |c{\bf P}|^2}~,
\label{dispersion}
\end{equation}
were $\mathbf{P} = (P_1,P_2,P_3)$ is the momentum vector with the components defined in Eq. \ref{substp}.
This establishes the dispersion relation between $\omega$ and $k_j$.   Note that on the lattice the  $k_j$'s are restricted to the values $\big(-\frac{\pi}{\Delta_j}, +\frac{\pi}{\Delta_j}\big]$ and $\omega$ to $\big(-\frac{\pi}{\Delta_o}, +\frac{\pi}{\Delta_o}\big]$.
Equation \eqref{dispersion}  can be solved for $\omega$ by  taking the square and rearranging it to the final result for the dispersion relation
\begin{equation}
\hbar\omega=\pm \frac{2\hbar}{\Delta_t} \arcsin\big[X(k_j)\big]\quad \mbox{with} \quad X(k_j)=\sqrt{\frac{ (mc^2)^2 + |c {\bf P}|^2} {(mc^2)^2 + (\frac{2 \hbar  }{\Delta_t})^2}}~.
\label{pwdsip}
\end{equation}
\\
Moreover,
\begin{equation}
E(\omega)=\pm\frac{2\hbar}{\Delta_t} X(k_j)
\end{equation}
and 
\begin{equation}
m(\omega)c^2 =mc^2 \sqrt{1-X(k_j)^2}~.
\end{equation}
\\
\\
The spinor eigenket structure, as for the continuum model, is \cite{sakurai}

\begin{equation}
 {\vec \psi}(k_j)^{(1)} =  N\left(\begin{array}{c} 1 \\ 0\\ \frac{cP_z}{E(\omega) +m(\omega)c^2}\\ \frac{c(P_x+iP_y) }{E(\omega) +m(\omega)c^2}\end{array}\right) ~ ,\qquad  {\vec \psi}(k_j)^{(2)} =  N\left(\begin{array}{c} 0 \\ 1\\  \frac{c(P_x-iP_y) }{E(\omega) +m(\omega)c^2} \\ -\frac{cP_z}{E(\omega) +m(\omega)c^2} \end{array}\right) \nonumber ,
\end{equation}
\\
\begin{equation}
 {\vec \psi}(k_j)^{(3)} =  N\left(\begin{array}{c}  \frac{-cP_z}{\mid E(\omega)\mid +m(\omega)c^2}\\ -\frac{c(P_x+iP_y) }{\mid E(\omega)\mid +m(\omega)c^2} \\ 1 \\ 0\\\end{array}\right) ~ ,\qquad  {\vec \psi}(k_j)^{(4)} =  N\left(\begin{array}{c} -\frac{c(P_x-iP_y) }{\mid E(\omega)\mid +m(\omega)c^2} \\ \frac{cP_z}{\mid E(\omega)\mid +m(\omega)c^2}\\   0 \\ 1 \end{array}\right) ~ .
\end{equation}
\\
$N$ depends on the choice of normalization.

The dispersion relation is shown in Fig. \ref{dispersion3D} for mass $m=0$ and two values of $k_z$.  The scheme displays a single cone centered around $k=0$ and preserves Kramers ($k_j\rightarrow -k_j$) and time-reversal symmetry.  Further symmetries depend on the ratios between the grid spacings $\Delta_\nu$.  

\subsection{Stability}

A preliminary discussion of stability can be based on the dispersion relation associated with plane wave solutions (in absence of an external potential), Eq. \eqref{pwdsip}.  Instability occurs when $| X(k_j) |$ becomes larger than one, making  the two solutions for $\omega$ to become a pair of complex conjugate numbers.  Imposing
$$
\Bigg|  \frac{(mc^2)^2 + (2 \hbar c)^2 \sum_{j=x}^{z} \frac{1}{\Delta_j^2} }{ (mc^2)^2 + (2 \hbar c)^2\frac{1}{\Delta_o^2} }\Bigg| \leq 1~,
$$
one arrives at the condition 
$$
r_x^2+r_y^2+r_z^2\leq 1 ~,
$$
for $r_j=\Delta_o/\Delta_j$. Thus, for equal spatial differences $\Delta_x = \Delta_y = \Delta_z = \Delta$ this gives the condition 
$r=\Delta_o/\Delta \leq 1/\sqrt{3}$.

In order to discuss stability of this scheme in more generality, we seek a quantity (norm) which is conserved under the discrete time evolution.  
For this purpose we introduce several definitions and abbreviations to simplify notation.   Let us recall on which grid positions the spinor components are defined
\begin{align}\label{ABCDgrid}
A_j ~\mbox{with}~ j &~\in~ (j_x,j_y,j_z)~\cup~(j_x+1/2,j_y+1/2,j_z)~, \nonumber \\
B_j ~\mbox{with}~ j &~\in~ (j_x+1/2,j_y,j_z+1/2)~\cup~(j_x,j_y+1/2,j_z+1/2)~, \nonumber \\
C_j ~\mbox{with}~ j &~\in~ (j_x,j_y,j_z+1/2)~\cup~(j_x+1/2,j_y+1/2,j_z+1/2)~, \nonumber \\
D_j ~\mbox{with}~ j &~\in~ (j_x+1/2,j_y,j_z)~\cup~(j_x,j_y+1/2,j_z)~,
\end{align}
where $j_x$, $j_y$ and $j_z$ are integer numbers.
In what follows we consider a time step $\Delta_t$ where spinor  $A$ and $B$ are propagated from $j_o-1/2$ to $j_o+1/2$, 
and $C$ and $D$ are propagated from $j_o$ to $j_o+1$, using the short-hand notation $A_j^-, B_j^-,C_j^-,D_j^- \rightarrow A_j^+, B_j^+,C_j^+,D_j^+ $. 
Furthermore we define a scalar product between spinor components  $F^{j_o}_j$ and $G^{j_o'}_{j+j'}$  on the lattice in which a sum over all spatial sites $j$ is performed
\begin{equation}
(F^{j_o}, G^{j_o'})_{j'} =\sum_j F^{j_o}_j G^{j_o'*}_{j+j'}= \sum_j F^{j_o}_{j-j'} G^{j_o'*}_{j}. 
\end{equation}
This short-hand notation involves a summation over all the spatial sublattice sites $j$ on the time sheet $j_o$ which belong to spinor component $F$.  Also, $j'$ is limited to spatial shifts which connect the spatial sublattice of $F$ to the one associated with spinor component $G$.  Note that this scalar product depends on  $j_o$,  $j_o'$, the relative position vector $j'$, and the two spinor components involved. Here, however, we only need to consider products for up to nearest neighbor sites.  For simplicity, we write $(F^{j_o}, G^{j_o'})_{0}=(F^{j_o}, G^{j_o'})$.
\\ 
$j_i^\pm$ is defined as the vector shifting in the spatial direction $i$ by one half grid-spacing. Eg. $j_x^\pm := (j_x\pm 1/2,j_y,j_z)$. Finally we define for any spinor component
\begin{equation}
(\delta_i F)_j:= F_{j_i^+} -  F_{j_i^-}~. 
\end{equation}
\\
Now we introduce the conserved functional:
\begin{lem}
The quadratic functional
\begin{align}
E_\mathbf{r} &:= (A,A) + (B,B) + (C,C) + (D,D) \nonumber \\ &~~- r_z \Re\Big[ (\delta_z C,A) -(\delta_z D,B) \Big]-r_x \Re\Big[(\delta_x D,A)  + (\delta_x C,B) \Big]\nonumber \\
&~~+  r_y\Re \Big[ i (\delta_y D,A) 
 - i (\delta_y C,B) \Big]
 \label{EPM}
\end{align}
is conserved under time propagation by the scheme \eqref{3+3Ddis}. Here we used the notation
$\mathbf{r}=(r_x,r_y,r_z)$.
\end{lem}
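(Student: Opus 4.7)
The plan is to use the discrete energy method, following the standard Crank-Nicolson calculation for unitary implicit-in-time schemes. For each of the eight update equations \eqref{sbegin}--\eqref{send} I would multiply through by the complex conjugate of the corresponding time-averaged spinor component: the $A$-equation by $\overline{(A^+ + A^-)/2}$, and analogously for $B, C, D$. Summing over the spatial sublattice on which the component lives and taking the real part, the mass-and-potential terms drop out, since $1/(i\hbar c)$ is purely imaginary while $mc^2 + V$ and $|F^+ + F^-|^2$ are real. The time difference on the left-hand side telescopes to $\tfrac{1}{2}[(F^+,F^+) - (F^-,F^-)]$ by the algebraic identity $\Re[(F^+ - F^-)(F^+ + F^-)^\ast] = |F^+|^2 - |F^-|^2$ underlying the Crank-Nicolson unitarity proof. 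This produces four real identities, one per spinor component, expressing the change in its squared norm as a real part of a bilinear form in the $\delta_i$'s of the partner components.

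The second step is discrete summation-by-parts on the staggered sublattices. The identity $\Re(\delta_i F, G) = -\Re(\delta_i G, F)$ for $i = x, z$, together with the symmetric analogue $\Re[i(\delta_y F, G)] = \Re[i(\delta_y G, F)]$ for the $y$-coupling (the extra $i$ flips one sign relative to the antisymmetric case), lets me transfer the $\delta_i$'s freely between spinor components. Adding the four identities and organizing by cross-term type, the intermediate contributions involving $(\delta_z C^-, A^+)$, $(\delta_x D^-, A^+)$ and their analogues, which appear with opposite signs in the $A, B$-identities and in the $C, D$-identities, cancel exactly. What remains telescopes over one full time step to discrete differences of the form $\Re(\delta_i C^+, F^+) - \Re(\delta_i C^-, F^-)$, which are precisely the negatives of the increments of the cross terms in $E_\mathbf{r}$ under the natural interpretation that $A, B$ live on their half-integer time sheet and $C, D$ on the adjacent integer sheet both before and after propagation. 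Collecting yields $E_\mathbf{r}^{\text{after}} - E_\mathbf{r}^{\text{before}} = 0$.

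The main obstacle I expect is the bookkeeping: there are two sublattices per spinor component in \eqref{ABCDgrid}, four components, two half-steps, and three spatial directions with different sign and $i$-patterns dictated by the Dirac matrices $\alpha_j$. Verifying that each $\delta_i$-shift maps the sublattice of one component exactly onto the sublattice of its partner, so that the inner products are well-defined and the summation-by-parts identity holds without boundary contributions (assumed via periodic or compactly-supported data), is the most error-prone part. Once the indexing and sign tracking is settled, the conservation is an essentially algebraic consequence of the Crank-Nicolson identity together with discrete integration by parts.
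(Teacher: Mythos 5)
Your proposal is correct and follows essentially the same route as the paper's proof: multiply each update equation by the conjugate time-average of the component being propagated, sum over the sublattice, take real parts so the (real) mass and potential terms drop out, then use discrete summation by parts (with the sign flip for the $i\,\delta_y$ terms) so that the mixed-time cross terms cancel and the remainder telescopes into $E_\mathbf{r}^{+}-E_\mathbf{r}^{-}=0$. This is exactly the multiplication-technique argument the paper carries out (there condensed to four short-hand equations covering both sublattices of each component).
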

\begin{proof}\label{proofstability3D}
The proof follows the strategy from the (1+1)D paper \cite{hammer1D}.
We begin by writing the scheme in short-hand notation using the definitions given above:
 \begin{align}
A^+_j-A^-_j +r_z(C^-_{j_z^+}-C^-_{j_z^-})+r_x(D^-_{j_x^+}-D^-_{j_x^-})- i r_y (D^-_{j_y^+}-D^-_{j_y^-})&=\frac{\Delta_o(mc^2+V)_j}{2i\hbar} (A^+_j+A^-_j )~,  \label{ss1} \\
B^+_j-B^-_j -r_z(D^-_{j_z^+}-D^-_{j_z^-})+r_x(C^-_{j_x^+}-C^-_{j_x^-})+i r_y (C^-_{j_y^+}-C^-_{j_y^-})&=\frac{\Delta_o(mc^2+V)_j}{2i\hbar} (B^+_j+B^-_j )~,   \label{ss2}\\
C^+_{j}-C^-_{j} +r_z(A^+_{{j}_z^+}-A^+_{{j}_z^-})+r_x(B^+_{{j}_x^+}-B^+_{{j}_x^-})- i r_y (B^+_{{j}_y^+}-B^+_{{j}_y^-})&=\frac{\Delta_o(-mc^2+V)^+_{j}}{2i\hbar} (C^+_{j}+C^-_{j} )~,   \label{ss3}\\
D^+_j-D^-_j -r_z(B^+_{j_z^+}-B^+_{j_z^-})+r_x(A^+_{j_x^+}-A^+_{j_x^-})+i r_y (A^+_{j_y^+}-A^+_{j_y^-})&=\frac{\Delta_o(-mc^2+V)^+_j}{2i\hbar} (D^+_j+D^-_j )~, \label{ss4}
\end{align}
where $V=V^{j_o}$, $V^+=V^{j_o+1/2}$, and analogously for $m$.
\\
Each of the  Eqs. \eqref{ss1} to  \eqref{ss4} is used to obtain an identity for the real part of a scalar product.  Eq. \eqref{ss1}  is multiplied by $(A^+_j+A^-_j )^*$ and summed over all lattice sites $j$.  Similarly Eq. \eqref{ss2}  is multiplied by $(B^+_j+B^-_j )^*$  and so on.  Adding up the real part of these four equations one obtains

\begin{align}
(A^+,A^+)-(A^-,A^-)+(B^+,B^+)-(B^-,B^-)+(C^+,C^+)-(C^-,C^-)+ (D^+,D^+)-(D^-,D^-)  \nonumber \\ 
+\Re \left\{ r_z\left[ (A^+ + A^-,\delta_z C^-)+(C^+ +C^-,  \delta_z A^+) -
 (B^++B^-,\delta_z D^-)-(D^+ + D^-,\delta_z B^+) \right] \right. \nonumber \\
\left.  + r_x\left[(A^+ + A^-,\delta_x D^-)+(D^+ + D^-,\delta_x A^+) +
 (B^+ + B^-,\delta_x C^-)+(C^+ + C^-,\delta_x B^+) \right] \right\}  \nonumber \\
-r_y \Im\left\{ (A^+ +A^-,\delta_y D^-)-(D^+ + D^-,\delta_y A^+) -
 (B^+ + B^-,\delta_y C^-)+(C^+ + C^-,\delta_y B^+)
 \right\}   =0~. 
 \label{E+-E-}
\end{align}
\\
Whenever possible, terms have been grouped in pairs of the form $\Re\{(F^++F^-,\delta_k G^-)+ (G^++G^-,\delta_k F^+)\}$ or $\Im\{(F^++F^-,\delta_k G^-)- (G^++G^-,\delta_k F^+)\}$.  
Each of the second terms in these pairs may be rewritten as follows (i.e.~by summation by parts)
$$ 
(G^++G^-,\delta_k F^+)= \sum_j (G^+_j+G^-_j)~(F^+_{j_k^+}-F^+_{j_k^-})^*= -(\delta_k G^+,F^+)-(\delta_k G^-,F^+) ~,
$$
We now observe that the contribution  arising from mixed terms in time $+$ and $-$  cancel when taking, respectively,  the real and imaginary part, and  Eq. \eqref{E+-E-} takes the form $E^+-E^-=0$ with $E$ given in Eq. \eqref{EPM}.
\end{proof}
%
%
Next we utilize the conserved functional \eqref{EPM} to prove stability of the numerical scheme for arbitrary space- and time-dependent mass and potential.
\begin{prop}\label{Prop2}
Let $r_x+r_y+r_z< 1$ (e.g. using $r=r_x=r_y=r_z < 1/3$). 
Then, the spinor elements stay finite and are bounded for all time by the estimate
\begin{equation}\label{spinor-bound}
(A,A) + (B,B) + (C,C) + (D,D) \leq \frac{E^{0}_\mathbf{r}}{1- r_x - r_y - r_z}~.
\end{equation}
$E^{0}_\mathbf{r}$ is the squared norm \eqref{EPM} of the initial data $A^{-1/2}$, $B^{-1/2}$, $C^{0}$, and $D^{0}$.
\end{prop}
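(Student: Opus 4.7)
The plan is to combine the conservation law from Lemma~1 with elementary inequalities in order to bound the squared norm by the value of $E_{\mathbf{r}}$. Since $E_{\mathbf{r}}$ is preserved by the time stepping, it always equals its initial value $E^0_{\mathbf{r}}$; it therefore suffices to produce a lower bound of $E_{\mathbf{r}}$ in terms of $(A,A) + (B,B) + (C,C) + (D,D)$ with explicit constant $1 - r_x - r_y - r_z$.

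Each of the six cross-terms in \eqref{EPM} has the form $\Re(\delta_\nu F, G)$ or $\Im(\delta_\nu F, G)$ with $F,G \in \{A,B,C,D\}$. First I would apply Cauchy--Schwarz, $|(\delta_\nu F, G)| \leq \|\delta_\nu F\|\,\|G\|$ with $\|\cdot\|^2 = (\cdot,\cdot)$, followed by Young's inequality $2ab \leq a^2+b^2$. The remaining ingredient is a discrete gradient estimate $\|\delta_\nu F\|^2 \leq 4\|F\|^2$. I would derive this from the pointwise bound $|F_{j_\nu^+} - F_{j_\nu^-}|^2 \leq 2\bigl(|F_{j_\nu^+}|^2 + |F_{j_\nu^-}|^2\bigr)$ together with the bookkeeping observation that, as $j$ ranges over the sublattice on which $\delta_\nu F$ is defined, each $F$-site appears exactly twice in the sum --- once as a ``right neighbor'' $j_\nu^+$ and once as a ``left neighbor'' $j_\nu^-$ of two adjacent $(\delta_\nu F)$-sites. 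Chaining these steps yields $|(\delta_\nu F,G)| \leq \|F\|^2 + \|G\|^2$ for every pair $(F,G)$ occurring in \eqref{EPM}.

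Summing these bounds with the coefficients $r_x, r_y, r_z$, and noting that for each fixed coordinate direction $\nu$ each of $A,B,C,D$ appears in exactly one of the two grouped cross-terms, the net contribution from the three directions is at most $(r_x + r_y + r_z)\bigl[(A,A)+(B,B)+(C,C)+(D,D)\bigr]$. Consequently,
\begin{equation*}
E_{\mathbf{r}} \;\geq\; (1 - r_x - r_y - r_z)\,\big[(A,A) + (B,B) + (C,C) + (D,D)\big].
\end{equation*}
Under the hypothesis $r_x + r_y + r_z < 1$ one may divide through; the conservation $E_{\mathbf{r}} = E^0_{\mathbf{r}}$ from Lemma~1 then yields \eqref{spinor-bound} immediately.

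The step I expect to require the most care is the discrete gradient estimate $\|\delta_\nu F\|^2 \leq 4\|F\|^2$, because $F$ and $\delta_\nu F$ are supported on \emph{different} staggered sublattices in \eqref{ABCDgrid}. One must verify that the map sending a $(\delta_\nu F)$-site to its two neighboring $F$-sites truly constitutes a two-to-one covering, so that no $|F_j|^2$ is over- or undercounted; on an infinite or periodic lattice this is automatic, while on a bounded domain a harmless boundary adjustment is needed but leaves the interior constant $4$ untouched. Everything else is a routine application of Cauchy--Schwarz and Young's inequality.
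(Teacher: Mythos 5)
Your proof is correct and takes essentially the same route as the paper: both rest on the conservation of $E_\mathbf{r}$ from Lemma~1 together with the lower bound $E_\mathbf{r} \geq (1-r_x-r_y-r_z)\big[(A,A)+(B,B)+(C,C)+(D,D)\big]$, obtained by elementary estimates on the six cross terms. The paper merely splits each $\delta_\nu F$ into its two shifted copies and applies $2|\Re(a,b)|\leq \|a\|^2+\|b\|^2$ to each piece, which is your Cauchy--Schwarz-plus-gradient-bound argument unrolled; the two-to-one covering you single out as the delicate step is exactly the translation invariance of the sublattice sums that the paper uses implicitly when it identifies $\|F_{j_\nu^{\pm}}\|$ with $\|F\|$.
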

\begin{proof}
We start estimating the functional $E_\mathbf{r}$:
\begin{align}
E_\mathbf{r} = E^{0}_\mathbf{r} &= (A,A) + (B,B) + (C,C) + (D,D) \nonumber \\ &~~- r_z \Re\Big[ (\delta_z C,A) -(\delta_z D,B) \Big]-r_x \Re\Big[(\delta_x D,A)  + (\delta_x C,B) \Big]\nonumber \\
&~~+ r_y\Re \Big[i (\delta_y D,A) 
 -i (\delta_y C,B) \Big]\nonumber \\
\nonumber \\
&\geq (A,A) + (B,B) + (C,C) + (D,D) \nonumber \\
&~~- r_z \Big|\Re\Big[ (\delta_z C,A)\Big]\Big| - r_z \Big|\Re\Big[(\delta_z D,B) \Big]\Big| - r_x \Big|\Re\Big[(\delta_x D,A)\Big]\Big|  - r_x \Big|\Re\Big[(\delta_x C,B) \Big]\Big|\nonumber \\
&~~- r_y \Big|\Re \Big[ (\delta_y D,A)\Big]\Big| -r_y \Big|\Re\Big[(\delta_y C,B) \Big]\Big|\nonumber \\
\nonumber \\
&\geq (A,A) + (B,B) + (C,C) + (D,D) \nonumber \\
&~~- r_z \Big|\Re\Big[ (C_{j_z^+},A)\Big]\Big| - r_z \Big|\Re\Big[(D_{j_z^+},B) \Big]\Big| - r_x \Big|\Re\Big[(D_{j_x^+},A)\Big]\Big|  - r_x \Big|\Re\Big[(C_{j_x^+},B) \Big]\Big|\nonumber \\
&~~- r_y \Big|\Re \Big[ (D_{j_y^+},A)\Big]\Big| -r_y \Big|\Re\Big[(C_{j_y^+},B) \Big]\Big|\nonumber \\
&~~- r_z \Big|\Re\Big[ (C_{j_z^-},A)\Big]\Big| - r_z \Big|\Re\Big[(D_{j_z^-},B) \Big]\Big| - r_x \Big|\Re\Big[(D_{j_x^-},A)\Big]\Big|  - r_x \Big|\Re\Big[(C_{j_x^-},B) \Big]\Big|\nonumber \\
&~~- r_y \Big|\Re \Big[ (D_{j_y^-},A)\Big]\Big| - r_y \Big|\Re\Big[(C_{j_y^-},B) \Big]\Big|\nonumber \\
\nonumber \\
&\geq (A,A) + (B,B) + (C,C) + (D,D) \nonumber \\
&~~- \big(r_x+r_y+r_z\big) \bigg((A,A) + (B,B) + (C,C) + (D,D)\bigg)~.
\end{align}
In the last line we used the inequality $2 |\Re\left\{(a,b)\right\}| \leq  \left\| a \right\|^2 + \left\|b\right\|^2$. And this yields the estimate \eqref{spinor-bound}.
\end{proof}
~\\
By comparison with the result from the ``reality condition" ($\Im\{\omega\} = 0$) of the dispersion (obtained for constant mass and potential), one observes that the condition in Proposition \ref{Prop2} for arbitrary space- and time-dependent coefficients gives a narrower bound for stability ({\it i.e.}  $r < 1/3$ vs. $r\leq 1/\sqrt{3}$). 
This is due to the coarse estimates in the short proof above. 
And, in fact, we shall now improve the stability condition to $r\leq 1/\sqrt{3}$ for the case $r=r_x=r_y=r_z$:
\begin{prop}\label{Prop3}
Let $r=1/\sqrt{3}$. 
Then the scheme \eqref{3+3Ddis} for arbitrary space- and time-dependent mass and potential is stable. It satisfies the following estimate for all time:
$$
  \|(\tilde{A},\tilde{B})\|^2 + \|(\tilde{C},\tilde{D})\|^2 \le 2E^0,
$$
where $E^0$ is the ``energy'' of the initial data, as in \eqref{spinor-bound}. 
And on the l.h.s.~we use the following grid-averaged norm of the spinor components:
\begin{align}\label{CD-norm}
\|(\tilde{C},\tilde{D})\|^2:= &\sum_{j}
\left|\frac{C_{j+(0,0,\frac{1}{2})}+D_{j+(\frac{1}{2},0,0)}- i
D_{j+(0,\frac{1}{2},0)}}{2\sqrt{3}}+\frac{C_{j-(0,0,\frac{1}{2})}+D_{j-(\frac{1}
{ 2},0,0)}- i D_{j-(0,\frac{1}{2},0)}}{2\sqrt{3}}\right|^2\\\nonumber
&\!\!\!\!\!+\sum_{j}
\left|\frac{D_{j+(\frac{1}{2},0,1)}-C_{j+(1,0,\frac{1}{2})}- i
C_{j+(\frac{1}{2},\frac{1}{2},\frac{1}{2})}}{2\sqrt{3}}+\frac{D_{j+(\frac{1}{2},
0,0)}-C_{j+(0,0,\frac{1}{2})}- i
C_{j+(\frac{1}{2},-\frac{1}{2},\frac{1}{2})}}{2\sqrt{3}}\right|^2~,\\\nonumber
\end{align}
with $j\in \mathbb{Z}^3 \cup (\mathbb{Z}+1/2)^2\times\mathbb{Z}$ in the sum.
$\|(\tilde{A},\tilde{B})\|^2$ is defined analogously, subject to an index shift detailed in \eqref{ABCDgrid}.
\end{prop}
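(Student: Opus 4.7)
The plan is to replace the coarse Cauchy--Schwarz estimate used in Proposition \ref{Prop2} with an exact completion of squares in the conserved functional $E_{\mathbf{r}}$ that is tight at $r = 1/\sqrt{3}$. The key observation is that, for each $A$-sublattice site $j$, the six cross-coupling terms in \eqref{EPM} touching $A_{j}$ assemble into $-r\,\Re[A_{j}^{*}(\Phi_{j}^{+} - \Phi_{j}^{-})]$ with the three-term neighbor combination
\[
\Phi_{j}^{\pm} := C_{j\pm(0,0,\frac{1}{2})} + D_{j\pm(\frac{1}{2},0,0)} - i\,D_{j\pm(0,\frac{1}{2},0)},
\]
and an analogous expression $+r\,\Re[B_{j}^{*}(\Psi_{j}^{+} - \Psi_{j}^{-})]$ on the $B$-sites with $\Psi_{j}^{\pm} := D_{j\pm(0,0,\frac{1}{2})} - C_{j\pm(\frac{1}{2},0,0)} - i\,C_{j\pm(0,\frac{1}{2},0)}$. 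These are precisely the local combinations (up to the normalization $1/(2\sqrt{3})$) building $\|(\tilde{C}, \tilde{D})\|^{2}$ in \eqref{CD-norm}.

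Using $|X - Y|^{2} = 2(|X|^{2} + |Y|^{2}) - |X + Y|^{2}$ to cast the completion of squares in the form adapted to the norm, one obtains the local identity
\[
|A_{j}|^{2} - r\,\Re[A_{j}^{*}(\Phi_{j}^{+} - \Phi_{j}^{-})] = \Bigl|A_{j} - \tfrac{r}{2}(\Phi_{j}^{+} - \Phi_{j}^{-})\Bigr|^{2} + \tfrac{r^{2}}{4}|\Phi_{j}^{+} + \Phi_{j}^{-}|^{2} - \tfrac{r^{2}}{2}\bigl(|\Phi_{j}^{+}|^{2} + |\Phi_{j}^{-}|^{2}\bigr),
\]
and its $B,\Psi$-analog. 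Because $r^{2}/4 = 1/(2\sqrt{3})^{2}$ precisely at $r = 1/\sqrt{3}$, summing these identities over all $A$- and $B$-sites collects the middle terms into exactly $\|(\tilde{C}, \tilde{D})\|^{2}$.

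It then remains to absorb the ``penalty'' $\tfrac{r^{2}}{2}\sum_{j}(|\Phi_{j}^{+}|^{2} + |\Phi_{j}^{-}|^{2} + |\Psi_{j}^{+}|^{2} + |\Psi_{j}^{-}|^{2})$ into the still-unused $(C,C) + (D,D)$ part of $E_{\mathbf{r}}$. A direct site-count shows that every $|C_{k}|^{2}$ and every $|D_{k}|^{2}$ enters this sum with multiplicity exactly $6$ (twice per spatial direction), so the diagonal coefficient $(r^{2}/2)\cdot 6 = 1$ at $r = 1/\sqrt{3}$ matches $(C,C) + (D,D)$ term by term. The residual off-diagonal bilinears $\Re[C_{k}^{*}D_{l}]$ and $\Im[C_{k}^{*}D_{l}]$ are handled by performing the dual completion of squares, i.e.\ rewriting $-r\,\Re[(\delta_{\nu} C, A)]$ etc.\ by summation-by-parts as sums centered on the $(C,D)$-sublattice and completing squares there. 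This yields $E_{\mathbf{r}} \geq (\text{nonneg}) + \|(\tilde{A}, \tilde{B})\|^{2} + [\text{analogous penalty in } A, B]$. Adding the two dual representations and invoking the conservation $E_{\mathbf{r}} = E^{0}$ from the Lemma closes the estimate $\|(\tilde{A}, \tilde{B})\|^{2} + \|(\tilde{C}, \tilde{D})\|^{2} \leq 2 E^{0}$.

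The principal obstacle is the delicate combinatorial bookkeeping across the four interlocking staggered sublattices: one needs to verify that the multiplicity-$6$ counting \emph{and} the exact cancellation of off-diagonal bilinears between the two dual representations both hinge on the sign and phase pattern hard-wired into the definitions of $\Phi^{\pm}$ and $\Psi^{\pm}$ --- a reflection of the Pauli-matrix structure of the Dirac operator. It is precisely this double match that singles out $r = 1/\sqrt{3}$ as the critical threshold.
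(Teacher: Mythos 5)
Your overall architecture---completing the square in $E_{\mathbf r}$ at $r=1/\sqrt3$ so that the $\tfrac{r^2}{4}|\Phi_j^{+}+\Phi_j^{-}|^2$ terms reproduce the grid-averaged norm \eqref{CD-norm}, using the multiplicity-$6$ count to match the diagonal part of the penalty against $(C,C)+(D,D)$, and finally adding the resulting bound to its $(A,B)$-centered dual and invoking conservation of $E_{\mathbf r}$---is exactly the skeleton of the paper's proof in \ref{A}, where the same algebra is packaged as an auxiliary sum of squares $\tilde E$ together with the verification $\tilde E=E$, followed by the inequality $\tfrac14\|a_1+a_2\|^2\le\tfrac12\|a_1+b\|^2+\tfrac12\|a_2-b\|^2$. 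Your completion-of-squares identity is correct, and the multiplicity-$6$ count checks out (two hits per $C$- or $D$-site from the $\Phi$'s, four from the $\Psi$'s, or vice versa).

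The genuine gap is your treatment of the off-diagonal bilinears left over in $\tfrac{r^2}{2}\sum_j\big(|\Phi_j^{\pm}|^2+|\Psi_j^{\pm}|^2\big)$, i.e.\ the cross products of the form $C^{*}D$ and of the form $D_{j_x^{\pm}}D_{j_y^{\pm}}^{*}$, $C_{j_x^{\pm}}C_{j_y^{\pm}}^{*}$. You propose to dispose of them by adding the dual, $(C,D)$-centered completion of squares. That cannot work: the leftover of the $(A,B)$-centered representation involves only values of $C$ and $D$, while the leftover of the dual representation involves only values of $A$ and $B$; for generic spinor data these are independent quantities, so no cancellation between the two representations can occur, and the bilinears have no definite sign that would allow you to drop them. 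What actually happens---and what must be verified explicitly, since it is the technical heart of the appendix---is that each leftover vanishes identically \emph{on its own}: the $C$--$D$ products generated by the $\Phi$-squares cancel pairwise against those generated by the $\Psi$-squares (both sit inside the same $(A,B)$-centered representation, and the pairing relies on the relative signs and factors of $i$ in $\Phi^{\pm}$ and $\Psi^{\pm}$), while the remaining $D_xD_y^{*}$ and $C_xC_y^{*}$ products are telescopic sums over the staggered lattice and vanish for $l^2$ data. Without this intra-representation cancellation your chain of inequalities does not close; once it is supplied, the rest of your argument goes through and coincides with the paper's proof.
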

This stability proof is more lengthy and hence deferred to \ref{A}.
~\\

 
\section{Numerical Scheme for the (2+1)D Dirac Equation}\label{sect3}

In some cases it is not necessary to solve the full (3+1)D Dirac equation.  For example, under translational invariance in one spatial direction  (when mass and potential terms are independent of this space coordinate)  one can perform a partial Fourier transform within the scheme above, essentially leading to the substitution Eq. \eqref{substp}.
The spin degree of freedom may be unimportant or it may be locked to the orbital motion, as is the case for the effective model for topological insulator (TI) surface states \cite{qi}. Then,  a two component spinor in a  (2+1)D  model is adequate to capture the underlying physics.  In some cases, low-dimensional models have been used simply because they are easier to handle.  

Once more, we use the Schr\"{o}dinger form of the (2+1)D Dirac equation
\begin{equation}
i \hbar c \partial_t \mbox{\boldmath$\psi$}(x,y,t) = \bigg[- i \hbar c \sigma_x \frac{\partial}{\partial x} - i \hbar c  \sigma_y \frac{\partial}{\partial y}  + \sigma_z m(x,y,t)+\mathbb{1}_2 V(x,y,t)\bigg] \mbox{\boldmath$\psi$}(x,y,t)~,
\end{equation}
where $\mbox{\boldmath$\psi$}(x,y,t)\equiv(u,v)\in\mathbb{C}^2$ is a $2$-component spinor. For topological insulators, the two components are associated with spin=$1/2$, whereby the physical spin quantization axis is $\mathbf{S}\propto\mathbf{\hat z}\times \mbox{\boldmath$\sigma$}$. Here, $\mathbf{\hat z}$ is the unit-normal vector to the surface and  $c$ constitutes the effective velocity of the quasi-particles in the effective model. $V\in\mathbb{R}$ is the scalar potential.
\begin{figure}[t!]
\centering
%
\includegraphics[width=8cm]{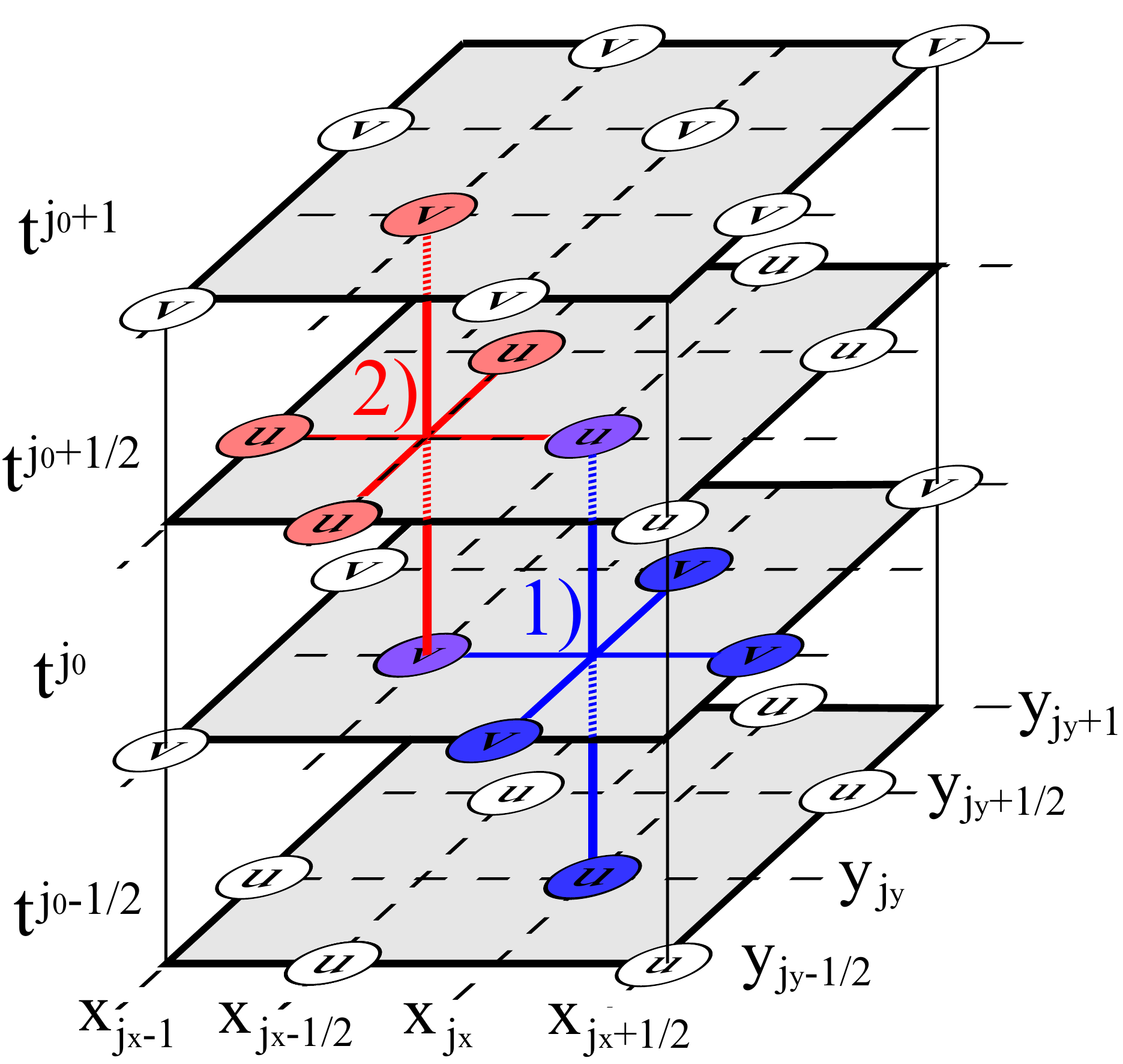}
\caption{(color online). Leap-frog time stepping on a time and space staggered grid for the (2+1)D two component spinor Dirac equation: Time propagation is shown to take place in vertical direction.  \textcolor{blue}{1)} In the first sub-step new $u$-components \textcolor{blue}{(blue/dark gray)} on time sheet $t^{j_o+1/2}$ are computed from $u$ on time sheet $t^{j_o-1/2}$ and the centered spatial differences of the $v$ components living on time sheet $t^{j_o}$. \textcolor{red}{2)} In the second sub-step \textcolor{red}{(red/light gray)} the centered spatial differences of $u$ on time sheet  $t^{j_o+1/2}$ together with the old $v$-components on time sheet $t^{j_o}$ are used to update for the new $v$'s on time sheet $t^{j_o+1}$.}\label{scheme2D}
\end{figure}
%

Earlier on and mainly for the treatment of rectangular structures we have presented a scheme which shows perfect dispersion along the coordinate axes for the free mass-less case, but with a second Dirac cone equally shared by the four corners of the first Brillouin zone \cite{hammer2D2cone,hammerAPL}.  It is therefore ideally suited for TI surface states with rectangular structuring.  
Here a different staggering of the grid in space is applied to the $u$- and $v$-component of the spinor $\mbox{\boldmath$\psi$} = (u,v)$ which eliminates the second cone.  It is represented graphically in Fig. \ref{scheme2D} and 
follows the procedure for the (3+1)D case:  A progression in time by $\Delta_o=c\Delta_t$ involves two initial, $t=j_o-1/2, j_o$, and two final time sheets, $t=j_o+1/2, j_o+1$.  Spinor component $u$ lives on a face-centered  rectangular (fcr) lattice (fcr-u) on the half-integer time sheets. Spinor component $v$ lives on the fcr lattice (fcr-v) formed by the set of midpoint positions of lines connecting nearest neighbors of the simple rectangular sub-lattice of $u$, however, shifted vertically onto the integer time sheets.  In other words, fcr-u is converted into fcr-v by a shift by $(i+1/2)\Delta_x {\hat e}_x $ (or $ (i+1/2)\Delta_y{\hat e}_y$)  followed by $\pm (j+1/2) \Delta_o {\hat e}_o, i, j \in\mathbb{Z}$, and vice versa.

The time progression by $\Delta_o$ is executed in two steps.  First $u$ is propagated from time sheet $j_o-1/2$ to $j_o+1/2$ (both supporting fcr-u) by forming symmetric 
$x$- and $y$-derivatives of $v$ using fcr-v on time sheet $j_o$.  In the second step, $v$ on fcr-v on time sheet $j_o$ is propagated to fcr-v on time sheet $j_o+1$ using symmetric $x$- and $y$-derivatives of $u$ living on fcr-u on time sheet $j_o+1/2$.  
According to the discussion above, one has to propagate
$u_{j_x,j_y}^{j_o-1/2}$, $u_{j_x+1/2,j_y+1/2}^{j_o-1/2}$, $v_{j_x+1/2,j_y}^{j_o}$, and $v_{j_x,j_y+1/2}^{j_o}$ for $j_\nu\in \mathbb{Z}$.  
Using second order accurate symmetric approximations for the derivatives one arrives at the following scheme for this grid
\begin{align}\label{scheme2Deq}
\frac{u_{j_x,j_y}^{j_o+1/2}-u_{j_x,j_y}^{j_o-1/2}}{\Delta_t}=&\bigg(\frac{m+V}{i \hbar c}\bigg)_{j_x,j_y}^{j_o}\frac{u_{j_x,j_y}^{j_o+1/2}+u_{j_x,j_y}^{j_o-1/2}}{2}\\
-&\frac{(v_{j_x+1/2,j_y}^{j_o}-v_{j_x-1/2,j_y}^{j_o})}{\Delta_x}+i\frac{(v_{j_x,j_y+1/2}^{j_o}-v_{j_x,j_y-1/2}^{j_o})}{\Delta_y}~,\nonumber
\end{align}
\begin{align}\label{scheme2ndline}
\frac{v_{j_x-1/2,j_y}^{j_o+1}-v_{j_x-1/2,j_y}^{j_o}}{\Delta_t}=-&\bigg(\frac{m-V}{i \hbar c}\bigg)_{j_x-1/2,j_y}^{j_o+1/2}\frac{v_{j_x-1/2,j_y}^{j_o+1}+v_{j_x-1/2,j_y}^{j_o}}{2}\\
-&\frac{(u_{j_x,j_y}^{j_o+1/2}-u_{j_x-1,j_y}^{j_o+1/2})}{\Delta_x} -i\frac{(u_{j_x-1/2,j_y+1/2}^{j_o+1/2}-u_{j_x-1/2,j_y-1/2}^{j_o+1/2})}{\Delta_y}~,\nonumber
\end{align}
and  analogous equations for the other sub-grid
\begin{align}\label{scheme3rdline}
\frac{u_{j_x-1/2,j_y-1/2}^{j_o+1/2}-u_{j_x-1/2,j_y-1/2}^{j_o-1/2}}{\Delta_t}=&\bigg(\frac{m+V}{i \hbar c}\bigg)_{j_x-1/2,j_y-1/2}^{j_o}\frac{u_{j_x-1/2,j_y-1/2}^{j_o+1/2}+u_{j_x-1/2,j_y-1/2}^{j_o-1/2}}{2}\\
-&\frac{(v_{j_x,j_y-1/2}^{j_o}-v_{j_x-1,j_y-1/2}^{j_o})}{\Delta_x} +i\frac{(v_{j_x-1/2,j_y}^{j_o}-v_{j_x-1/2,j_y-1}^{j_o})}{\Delta_y}~,\nonumber
\end{align}
\begin{align}\label{scheme4thline}
\frac{v_{j_x,j_y-1/2}^{j_o+1}-v_{j_x,j_y-1/2}^{j_o}}{\Delta_t}=-&\bigg(\frac{m-V}{i \hbar c}\bigg)_{j_x,j_y-1/2}^{j_o+1/2}\frac{v_{j_x,j_y-1/2}^{j_o+1}+v_{j_x,j_y-1/2}^{j_o}}{2}\\ &-\frac{(u_{j_x+1/2,j_y-1/2}^{j_o+1/2}-u_{j_x-1/2,j_y-1/2}^{j_o+1/2})}{\Delta_x} -i\frac{(u_{j_x,j_y}^{j_o+1/2}-u_{j_x,j_y-1}^{j_o+1/2})}{\Delta_y}~.\nonumber
\end{align}
Note that the $u$-component defined for the discrete time indices $j_o-1/2\in \mathbb{Z}$ `lives' on the discrete space gridpoints $(j_x,j_y)\in\mathbb{Z}^2$ and on $(j_x-1/2,j_y-1/2)\in\mathbb{Z}^2$,  whereas the $v$-component defined for $j_o\in \mathbb{Z}$ is defined for space indices $(j_x-1/2,j_y)\in\mathbb{Z}^2$ and $(j_x,j_y-1/2)\in\mathbb{Z}^2$.  
\\
For constant coefficients $m$ and $V$ von Neumann stability analysis reveals the stability of the finite difference scheme \cite{strikwerda}. Moreover, the growth factor shows that an imaginary potential can be utilized to model an absorbing boundary layer.
 Fourier transformation in the spatial coordinates gives
\begin{align}
\underbrace{\left(\begin{array}{cc}
\frac{1}{\Delta_t}-\frac{m+V}{i \hbar c} & 0\\
\frac{2 i}{\Delta_x}\sin{\frac{k_x \Delta_x}{2}}+i \frac{2 i}{\Delta_y}\sin{\frac{k_y \Delta_y}{2}} & \frac{1}{\Delta_t}+\frac{m-V}{i \hbar c} 
\end{array}\right)}_{=:S}
\left(\begin{array}{c}
\tilde{u}^{+}\\
\tilde{v}^{+}
\end{array}\right)\\\nonumber
+\underbrace{\left(\begin{array}{cc}
-\frac{1}{\Delta_t}-\frac{m+V}{i \hbar c} & \frac{2 i}{\Delta_x}\sin{\frac{k_x \Delta_x}{2}}-i \frac{2 i}{\Delta_y}\sin{\frac{k_y \Delta_y}{2}} \\
0 & -\frac{1}{\Delta_t}+\frac{m-V}{i \hbar c}
\end{array}\right)}_{=:T}
\left(\begin{array}{c}
\tilde{u}^{-}\\
\tilde{v}^{-}
\end{array}\right)=0~\label{stabilityeq}.
\end{align}
%
One defines the amplification matrix $G = -S^{-1} T = S^{-1} S^* ~$.  Its eigenvalues are the growth factors (written for $\hbar=1, c= 1$ and $\Delta_x = \Delta_y = \Delta$)
\be
\lambda_{\pm} = P/2 \pm \sqrt{(P/2)^2-Q}~,
\ee
with
\be
P = \mbox{tr}[G] =-2 \Big[(m^2-V^2)\Delta_t^2 - 4 c^2 (1-\Delta_t^2) - 4 c^2 \Delta_t^2 (\Delta_y^2 \cos k_x \Delta_x + \Delta_x^2 \cos k_y \Delta_y)/(\Delta_x \Delta_y)^2 \Big]/N~,
\ee
\be
Q  = \mbox{det}[G] = (4 c^2 + (m^2-V^2)\Delta_t^2 - 4 i c V \Delta_t)/N
\ee
and
\be
N= 4 c^2 + (m^2-V^2)\Delta_t^2 + 4 i c V \Delta_t~.
\ee\\
For $\Delta_t/\Delta \leq 1/\sqrt{2}$ and $m, V \in\R$ the absolute value of the growth factors is $1$. Whereas for $V\in\mathbb{C}$ it depends on the sign of the imaginary part whether their absolute value is greater or lesser than $1$. The latter can be used for \emph{absorbing boundary layers}. The size of the absolute value of the growth (or in this case damping) factors is exemplified in Fig. \ref{abc}.
\begin{figure}[t!]
\centering
\includegraphics[width=13cm]{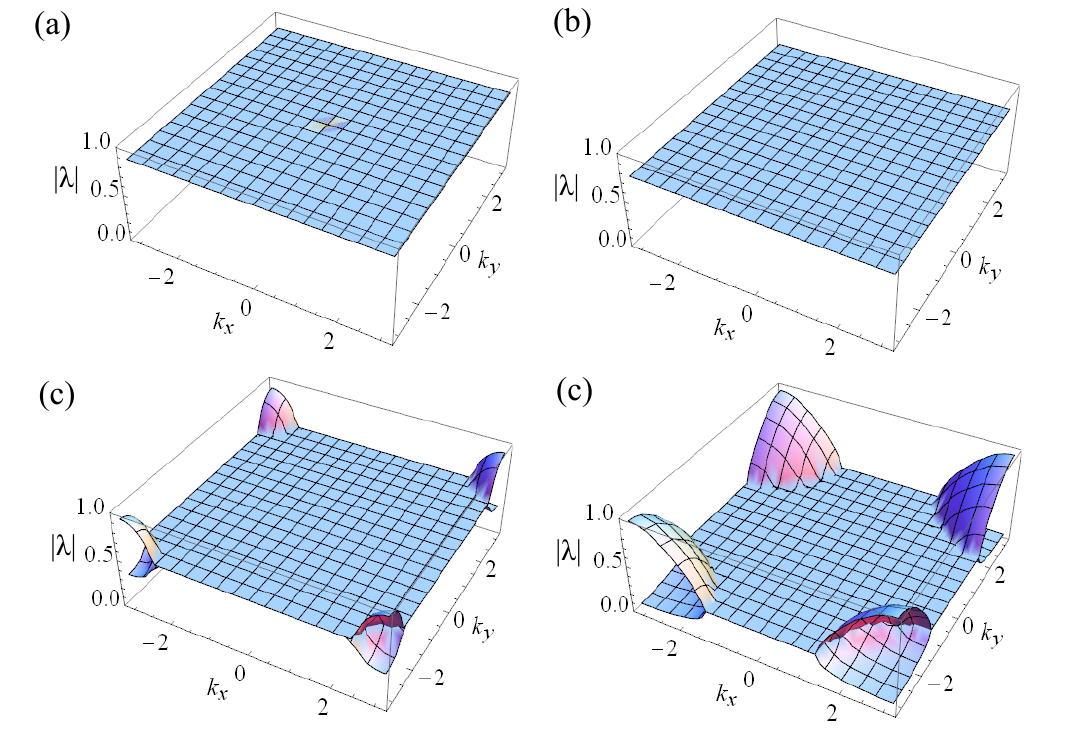}
\caption{(color online). Absolute value of the eigenvalues $|\lambda_\pm|$ of the growth matrix $G$  for $m=0$ and $\Delta_t/\Delta= 0.99/\sqrt{2}$. (a) $V=0.1 i$, (b) $V=0.2 i$, (c) $V=0.4 i$ and (d) $V=0.8 i$. }\label{abc}
\label{lambda-imag-V}
\end{figure}
\\
\noindent  The free-particle \emph{dispersion relation} is  revealed directly by using a plane wave ansatz\\ $u^{j_o+1}_{j_x+1,j_y+1} = e^{i (\omega \Delta_t - k_x \Delta_x - k_y \Delta_y)}u^{j_o}_{j_x,j_y}$ (and similarly for $v$) leading to the substitutions Eqs. \eqref{substE}-\eqref{substm} and the dispersion relation\\
\begin{equation}
\hbar \omega = \pm \frac{2 \hbar}{\Delta_t} \arcsin \Bigg\{ \sqrt{ \frac{1}{(m c^2)^2 + 2 \hbar c /\Delta_t} \bigg[(m c^2)^2 + \Big(\frac{2 \hbar}{\Delta_x}\sin \frac{k_{x} \Delta_x}{2}\Big)^2 + \Big(\frac{2 \hbar}{\Delta_y}\sin \frac{k_{y} \Delta_y}{2}\Big)^2\bigg]}\Bigg\}~.
\end{equation}\\
\\
The naive discrete expression for the norm $\left\|\psi\right\|:= \sqrt{\sum_{j_x,j_y} |u_{j_x,j_y}|^2
+ |v_{j_x,j_y}|^2}$ in general oscillates around its mean value and is conserved in time only on average.  We use a multiplication technique to identify an exactly  \emph{conserved functional} and prove stability for arbitrary space and time dependent coefficients,  as for the (3+1)D case above.
 Let us again introduce a short-hand notation using 
\begin{align}
u_j ~\mbox{with}~ j &~\in~ (j_x,j_y)~\cup~(j_x+1/2,j_y+1/2)~, \nonumber \\
v_j ~\mbox{with}~ j &~\in~ (j_x+1/2,j_y)~\cup~(j_x,j_y+1/2)~. \nonumber \\
\end{align}
We define spatial difference operators as $(\delta_x f^{j_o})_{j_x,j_y} = f_{j_x+1/2,j_y}^{j_o}-f_{j_x-1/2,j_y}^{j_o}$, $(\delta_y f^{j_o})_{j_x,j_y} = f_{j_x,j_y+1/2}^{j_o}-f_{j_x,j_y-1/2}^{j_o}$ and $(\delta_\pm f^{j_o})_{j_x,j_y} = (\delta_x f^{j_o})_{j_x,j_y}  \pm i(\delta_y f^{j_o})_{j_x,j_y} $. We define the inner product $(u^{j_o},v^{j_o'})_{j'} := \sum_{j} u_{j}^{j_o} v^{*j_o'}_{j+j'}=\sum_{j} u_{j-j'}^{j_o}  v^{*j_o' }_{j}$ on $l^2(\mathbb{Z}^2;\mathbb{C})$ and the notation $\left\|u^{j_o} \right\|^2:=(u^{j_o} ,u^{j_o} )$, with the sum over $j$ running  over all spatial lattice points on the  time sheet $j_o$.  $j'$ again denotes a displacement  vector connecting the two spatial sublattices of $u$ and $v$.    

In analogy to \eqref{EPM} we now define a conserved functional for the scheme \eqref{scheme2Deq}-\eqref{scheme4thline}:
\begin{lem}
Let $r=r_x=r_y$. Then the functional
\begin{equation}
E_r^{j_o} := \left\|u^{j_o+1/2}\right\|^2 + \left\|v^{j_o+1}\right\|^2 + r \Re \Big[(\delta_- u^{j_o+1/2},v^{j_o+1})\Big] = \mbox{const}= E_r^0~\label{E}
\end{equation}
is conserved under time propagation.
\end{lem}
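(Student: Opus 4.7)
The plan is to follow the same strategy that proved the (3+1)D conservation law \eqref{EPM}. First I would cast the scheme \eqref{scheme2Deq}--\eqref{scheme4thline} into the compact form
\begin{align*}
u_j^+-u_j^- + r\,(\delta_- v^-)_j &= i\,\alpha_j\,(u_j^++u_j^-),\\
v_j^+-v_j^- + r\,(\delta_+ u^+)_j &= i\,\beta_j\,(v_j^++v_j^-),
\end{align*}
using the short-hand $u^\pm:=u^{j_o\pm 1/2}$, $v^-:=v^{j_o}$, $v^+:=v^{j_o+1}$, with $\alpha_j,\beta_j\in\mathbb{R}$ collecting the mass--potential contributions. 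The decisive observation is that the factor $1/(i\hbar c)$ in \eqref{scheme2Deq}--\eqref{scheme4thline} makes these coefficients purely imaginary, so that they will drop under $\Re$ after the next step.

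Next I would take the inner product of the first identity with $(u^++u^-)^*$ (summed on the $u$-sublattice) and of the second with $(v^++v^-)^*$ (summed on the $v$-sublattice), and retain only the real parts. Using the elementary identity $\Re(f^+-f^-,\,f^++f^-)=\|f^+\|^2-\|f^-\|^2$, this produces
\begin{align*}
\|u^+\|^2-\|u^-\|^2 + r\,\Re\bigl[(\delta_- v^-,\,u^++u^-)\bigr] &= 0,\\
\|v^+\|^2-\|v^-\|^2 + r\,\Re\bigl[(\delta_+ u^+,\,v^++v^-)\bigr] &= 0.
\end{align*}
Adding these produces a bracket that splits into four cross inner products.

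The remaining task is to exhibit the telescoping structure $E_r^{j_o}-E_r^{j_o-1}=0$. Two of the four cross terms carry mixed time indices, namely $(\delta_- v^-,u^+)$ and $(\delta_+ u^+,v^-)$. A single summation by parts on the staggered sublattices (using $(\delta_x f,g)=-(f,\delta_x g)$ and $(\delta_y f,g)=-(f,\delta_y g)$, equivalently $(\delta_\pm f,g)=-(f,\delta_\mp g)$) shows that $(\delta_+ u^+,v^-)=-\overline{(\delta_- v^-,u^+)}$, so their sum is purely imaginary and disappears under $\Re$---exactly the same cancellation that drives the corresponding step in the (3+1)D proof. The two same-time cross terms $(\delta_- v^-,u^-)$ and $(\delta_+ u^+,v^+)$ are then rearranged by the same summation-by-parts identity so that the real part of the combined bracket collapses to the boundary term $\Re[(\delta_- u^+,v^+)]-\Re[(\delta_- u^-,v^-)]$ appearing in $E_r^{j_o}$, which closes the identity.

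The principal technical care, as in the (3+1)D case, lies in the bookkeeping of which staggered sublattice each quantity inhabits: because $u$ and $v$ live on disjoint sublattices and every $\delta_x$ or $\delta_y$ shifts its argument between them, one must verify that each inner product and each summation by parts pairs quantities on a common sublattice so that no boundary terms survive on the infinite grid. Once that housekeeping is in place, the algebraic manipulation parallels the (3+1)D derivation line for line.
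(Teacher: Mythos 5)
Your proposal is correct and follows essentially the same route as the paper: take the real part of the inner product of each update equation with the corresponding time-average $(u^++u^-)$ resp.\ $(v^++v^-)$ (which kills the purely imaginary mass/potential terms), add the two identities, and use summation by parts on the staggered sublattices so that the mixed-time cross terms cancel under $\Re$ and the remaining same-time terms telescope into $E_r^{j_o}-E_r^{j_o-1}=0$. The only difference is cosmetic bookkeeping of which of $\delta_\pm$ appears at each stage, a detail on which your sketch mirrors the paper's own conventions.
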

\begin{proof}
The real part of the inner product of \eqref{scheme2Deq} with $({u}^{j_o+1/2}+{u}^{j_o-1/2})$  gives
\begin{equation}
\left\|u^{j_o+1/2}\right\|^2-\left\|u^{j_o-1/2}\right\|^2 + r \Re\Big[(\delta_+ v^{j_o},u^{j_o+1/2}+u^{j_o-1/2})\Big]  = 0~.\label{firsteq}
\end{equation}
Analogously, \eqref{scheme2ndline} is multiplied by $(v^{j_o+1}+v^{j_o})^*$ and again the real part is taken to give 
\begin{equation}
\left\|v^{j_o+1}\right\|^2-\left\|v^{j_o}\right\|^2 + r \Re\Big[(\delta_- u^{j_o+1/2},v^{j_o+1}+v^{j_o})\Big]  = 0~.\label{secondeq}
\end{equation}
Performing a summation by parts with vanishing "boundary terms" at infinity gives
\begin{equation}
 \Re \Big[(\delta_x v^{j_o}, u^{j_o+1/2}+u^{j_o-1/2})\Big]= -\Re\Big[(\delta_x u^{j_o+1/2}+\delta_x u^{j_o-1/2}, v^{j_o})\Big]~,
\end{equation}
\begin{equation}
 \Re \Big[(i \delta_y v^{j_o}, u^{j_o+1/2}+u^{j_o-1/2})\Big]=\Re \Big[i~(u^{j_o+1/2}+u^{j_o-1/2}, \delta_y v^{j_o})^*\Big]=\Re\Big[(i \delta_y u^{j_o+1/2}+i \delta_y u^{j_o-1/2}, v^{j_o})\Big]~.
\end{equation}
Finally,  adding Eq. \eqref{firsteq} and Eq. \eqref{secondeq} leads to
\begin{equation}
\left\|u^{j_o+1/2}\right\|^2 + \left\|v^{j_o+1}\right\|^2 + r \Re \Big[(\delta_- u^{j_o+1/2},v^{j_o+1})\Big] = \left\|u^{j_o-1/2}\right\|^2 + \left\|v^{j_o}\right\|^2 + r \Re \Big[(\delta_- u^{j_o-1/2},v^{j_o})\Big]~.
\end{equation}
\end{proof}

\begin{figure}[t!]
\centering
\includegraphics[width=13cm]{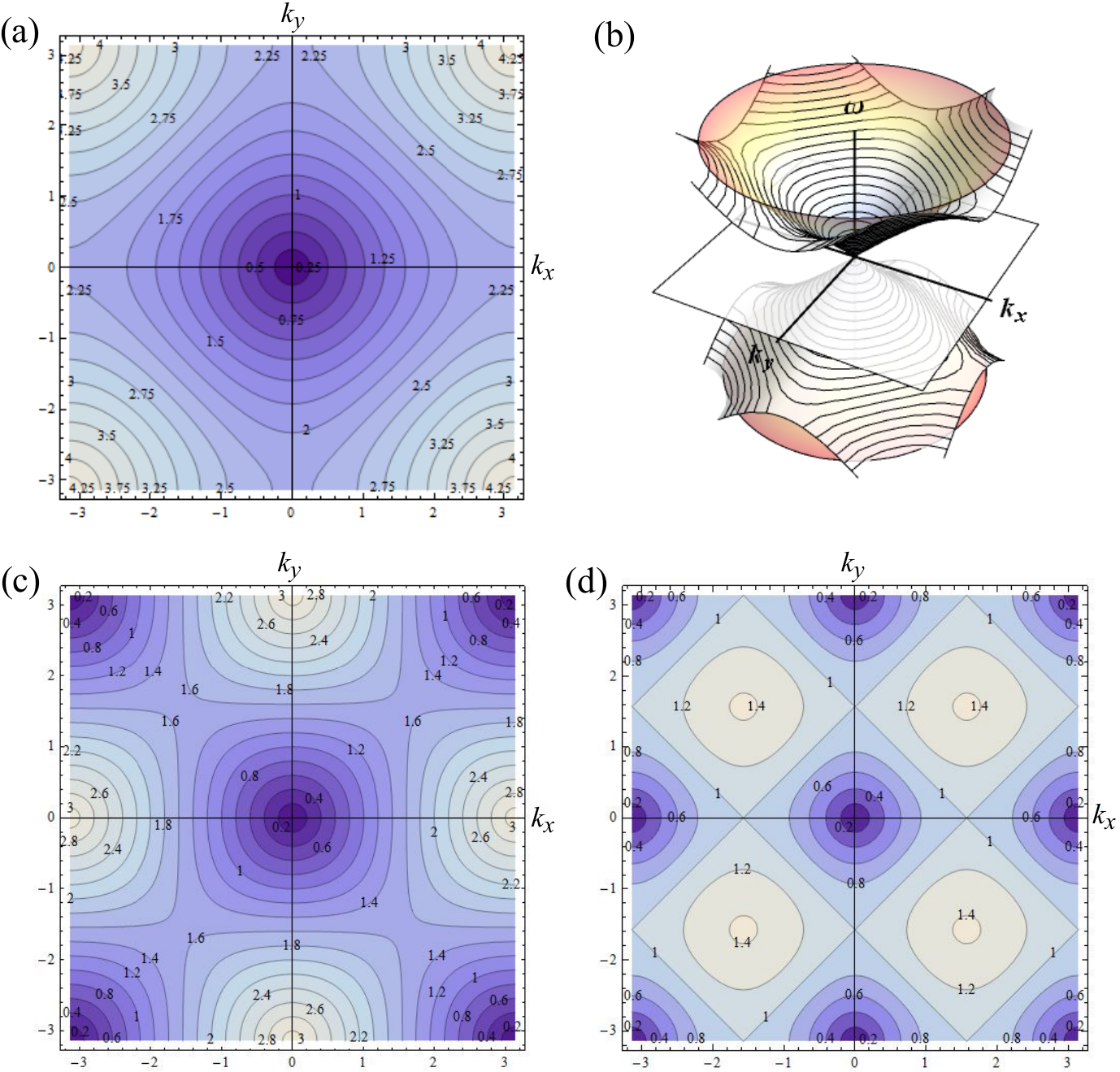}
\caption{(color online). Dispersion relation of the leap-frog staggered-grid finite difference scheme for the (2+1)D Dirac equation and comparison to other schemes from the literature. $\Delta_x = \Delta_y = \hbar = 1$, $\Delta_o = 1/\sqrt{2}$, $m=0$. (a) Contour plot of the positive energy dispersion relation. (b) Comparison of the numerical dispersion relation with the exact Dirac cone dispersion. (c) Dispersion relation of the leap-frog staggered-grid finite difference scheme with two Dirac cones \cite{hammer2D2cone}. (d) Dispersion relation of a centered differences in space and Crank-Nicolson in time scheme, without grid staggering, showing four Dirac cones.}
\label{dispersion2D}
\end{figure}
%
%
Similar to the (3+1)D case in Section \ref{proofstability3D}, we can use this result to prove stability for an arbitrary space- and time-dependent mass and potential.
\begin{prop}\label{Prop4}
Let $r_x+r_y< 1$ (e.g.~using $r_x=r_y <  1/2$). Then, the (2+1)D--scheme is stable and satisfies the estimate
$$
\left\|u^{j_o+1/2}\right\|^2 + \left\|v^{j_o+1}\right\|^2 \leq \frac{E^0_\mathbf{r}}{1- r_x - r_y}
$$
for all time.
\end{prop}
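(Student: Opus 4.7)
The plan is to follow the template of Proposition \ref{Prop2}: start from the conservation identity provided by Eq.~\eqref{E}, isolate the positive norm terms, and control the cross term from below using elementary Cauchy--Schwarz/Young estimates. Concretely, I would write $E_r^0 = E_r^{j_o}$ for every $j_o$ and rearrange to obtain
\begin{equation}
\left\|u^{j_o+1/2}\right\|^2 + \left\|v^{j_o+1}\right\|^2 \;=\; E_r^0 \;-\; r\,\Re\!\Big[(\delta_- u^{j_o+1/2},v^{j_o+1})\Big],
\end{equation}
so the whole task reduces to an upper bound on $|\Re[(\delta_- u,v)]|$ by a small multiple of $\|u\|^2+\|v\|^2$.

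To do that bound, I would expand $\delta_- = \delta_x - i\delta_y$ and split each discrete difference into its two nearest-neighbor shifts, $\delta_x u = u_{j_x^+} - u_{j_x^-}$ and similarly for $\delta_y u$. This gives four scalar products of the shape $(u_{j_\nu^\pm},v)$ whose real or imaginary parts can be controlled individually by the inequality $2|\Re\{(a,b)\}|\le \|a\|^2+\|b\|^2$ (or its imaginary-part analogue), exactly as in the last step of the proof of Proposition \ref{Prop2}. Crucially, because the inner product sums over the full sublattice and the half-integer shifts merely permute lattice sites, $\|u_{j_\nu^\pm}\|=\|u\|$ and likewise for $v$, so each of the four contributions is bounded by $\tfrac12(\|u\|^2+\|v\|^2)$. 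Collecting the $x$- and $y$-contributions with their respective weights $r_x,r_y$ (after noting that the conservation lemma extends verbatim to the anisotropic case $r_x\neq r_y$, since the $x$- and $y$-differences enter Eqs.~\eqref{scheme2Deq}--\eqref{scheme4thline} on an equal footing and the summation-by-parts step treats them independently), this yields
\begin{equation}
r\,\Big|\Re\!\big[(\delta_- u^{j_o+1/2},v^{j_o+1})\big]\Big| \;\le\; (r_x+r_y)\left(\left\|u^{j_o+1/2}\right\|^2+\left\|v^{j_o+1}\right\|^2\right).
\end{equation}

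Combining the two displayed estimates gives
\begin{equation}
(1-r_x-r_y)\left(\left\|u^{j_o+1/2}\right\|^2+\left\|v^{j_o+1}\right\|^2\right) \;\le\; E_r^0,
\end{equation}
and under the hypothesis $r_x+r_y<1$ this rearranges to the claimed bound. The only technical nuisance I anticipate is the bookkeeping connected to the two sublattices on which $u$ and $v$ live (cf.~\eqref{ABCDgrid} for the $(3+1)$D analogue): one must check that each half-integer shift really sends the sublattice of $u$ onto the sublattice of $v$ so that the shifted inner products are well defined and that the $l^2$-norm is preserved by the shift. This is a direct check from the sublattice definitions and is the 2D counterpart of the index juggling performed in the proof of Proposition \ref{Prop2}; no new ideas are required.
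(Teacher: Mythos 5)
Your proposal is correct and follows essentially the same route as the paper: both start from the conserved functional, split the discrete differences in the cross term into half-integer shifts, and bound each shifted scalar product via $2|\Re\{(a,b)\}|\le\|a\|^2+\|b\|^2$ using that the shifts are norm-preserving bijections between the $u$- and $v$-sublattices. Your remark that the conservation lemma must be read in its anisotropic $r_x\neq r_y$ form is a fair point the paper glosses over, but it introduces no new difficulty.
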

\begin{proof}
We start estimating the conserved functional
\begin{align}
E^0_\mathbf{r}=E^{j_o}_\mathbf{r}:=\left\|u^{j_o+1/2}\right\|^2 + \left\|v^{j_o+1}\right\|^2 &+ r_x \Re \Big[(\delta_x u^{j_o+1/2},v^{j_o+1})\Big] -  r_y \Re \Big[(i \delta_y u^{j_o+1/2},v^{j_o+1})\Big]\nonumber \\
\nonumber \\
\geq\left\|u^{j_o+1/2}\right\|^2 + \left\|v^{j_o+1}\right\|^2 &- r_x~ \Big|\Re \Big[(\delta_x u^{j_o+1/2},v^{j_o+1})\Big]\Big| - r_y~ \Big|\Re \Big[(i\delta_y u^{j_o+1/2},v^{j_o+1})\Big]\Big|\nonumber \\
\nonumber \\\geq \left\|u^{j_o+1/2}\right\|^2 + \left\|v^{j_o+1}\right\|^2 &- \big(r_x+r_y\big)\Big(\left\|u^{j_o+1/2}\right\|^2 + \left\|v^{j_o+1}\right\|^2 \Big)~\nonumber, 
\end{align}
where we have used the inequality $2 |\Re\left\{(a,b)\right\}| \leq  \left\| a \right\|^2 + \left\|b\right\|^2$. And this yields the claimed result.
\end{proof}
Again, the condition in Proposition \ref{Prop4} is too restrictive for constant mass and potential.  From the reality condition for the free-particle dispersion, one has $r_x^2+r_y^2 \leq 1$.  For the special case $r=r_x=r_y$ this  less restrictive stability condition reads $r \leq 1/\sqrt{2}$. As in Proposition \ref{Prop3} we shall now prove that this stability condition actually also holds for an arbitrary space- and time-dependent mass and potential. 

To this end we first define an averaged norm for the spinor components on the grid, pertaining to the special value $r=1/\sqrt{2}$:
\be
\left\|\tilde{u}\right\|^2 := \sum_{j_x,j_y} \bigg|\frac{u_{j_x+1,j_y}-i u_{j_x+1/2,j_y+1/2}}{2\sqrt{2}} + \frac{u_{j_x,j_y}-i u_{j_x+1/2,j_y-1/2}}{2\sqrt{2}}\bigg|^2~.
\ee
Here and in the following equations the summation runs over $j=(j_x,j_y) \in \mathbb{Z}^2 \cup (\mathbb{Z}+1/2)^2$. 

\begin{lem}
$\left\|\tilde{u}\right\|^2$ is a norm on $l^2\big[\mathbb{Z}^2 \cup (\mathbb{Z}+1/2)^2\big]$.
\end{lem}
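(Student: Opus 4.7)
The nonnegativity, absolute homogeneity, and triangle inequality are essentially free. Indeed, each term in the sum defining $\|\tilde{u}\|^2$ is the squared modulus of a fixed $\mathbb{C}$-linear combination of four values of $u$, so $\|\tilde u\|^2 = \sum_{j} |T_j(u)|^2$ with $T_j$ linear in $u$. Hence $\|\tilde u\|$ is the $\ell^2$-norm of the sequence $(T_j(u))_j$ pulled back by a fixed bounded linear map $L:\ell^2[\mathbb{Z}^2\cup(\mathbb{Z}+1/2)^2]\to\ell^2$, and the three properties are inherited from the $\ell^2$-norm. The only nontrivial property is \emph{definiteness}: $\|\tilde u\|=0$ must force $u\equiv 0$, i.e.~$\ker L = \{0\}$.

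To establish this, I would split $u$ into its two sublattice parts, $u^A_{m,n}:=u_{m,n}$ for $(m,n)\in\mathbb{Z}^2$ and $u^B_{m,n}:=u_{m+1/2,n+1/2}$ for $(m,n)\in\mathbb{Z}^2$, both elements of $\ell^2(\mathbb{Z}^2;\mathbb{C})$. The condition $T_{j_x,j_y}=0$ for $(j_x,j_y)\in\mathbb{Z}^2$ then reads
$$
u^A_{j_x+1,j_y}+u^A_{j_x,j_y} \;=\; i\bigl(u^B_{j_x,j_y}+u^B_{j_x,j_y-1}\bigr),
$$
while $T_{m+1/2,n+1/2}=0$ gives
$$
u^B_{m+1,n}+u^B_{m,n} \;=\; i\bigl(u^A_{m+1,n+1}+u^A_{m+1,n}\bigr).
$$
Passing to Fourier series on $\mathbb{T}^2=(-\pi,\pi]^2$ and setting $\alpha:=1+e^{ik_x}$, $\beta:=1+e^{-ik_y}$, $\gamma:=1+e^{ik_y}$, these two identities become the $2\times 2$ pointwise linear system
$$
M(k)\begin{pmatrix}\hat u^A(k)\\[2pt] \hat u^B(k)\end{pmatrix}=0,\qquad
M(k)=\begin{pmatrix}\alpha & -i\beta\\[2pt] -ie^{ik_x}\gamma & \alpha\end{pmatrix}.
$$

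The endgame is to show $M(k)$ is invertible almost everywhere on $\mathbb{T}^2$. Using $\alpha=2e^{ik_x/2}\cos(k_x/2)$ and $\beta\gamma=(1+e^{-ik_y})(1+e^{ik_y})=4\cos^2(k_y/2)$, one computes
$$
\det M(k) \;=\; \alpha^2+e^{ik_x}\beta\gamma \;=\; 4e^{ik_x}\bigl[\cos^2(k_x/2)+\cos^2(k_y/2)\bigr],
$$
which vanishes only at the isolated point $(k_x,k_y)=(\pi,\pi)$, hence on a set of measure zero in $\mathbb{T}^2$. Therefore $\hat u^A=\hat u^B=0$ a.e., and by Plancherel $u^A\equiv u^B\equiv 0$ in $\ell^2$, i.e.~$u\equiv 0$ on the full grid. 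The main obstacle I foresee is the purely combinatorial bookkeeping in the first step: one has to carefully track which of $u^A$ or $u^B$ each of the four shifted values $u_{j_x+1,j_y}$, $u_{j_x+1/2,j_y+1/2}$, $u_{j_x,j_y}$, $u_{j_x+1/2,j_y-1/2}$ represents separately on the integer-indexed and the half-integer-indexed branches of the sum. Once this translation to $(u^A,u^B)$ is recorded correctly, the Fourier computation and the determinant identity are short.
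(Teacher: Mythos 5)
Your proof is correct, but it takes a genuinely different route from the paper's. The paper argues entirely in real space: from $\|\tilde u\|=0$ it extracts the four-point relation \eqref{unormeq0}, observes that the real part of $u\big|_{\mathbb{Z}^2}$ couples only to the imaginary part of $u\big|_{(\mathbb{Z}+1/2)^2}$ (and vice versa), and then builds telescoping chains of equal-magnitude values whose $l^2$-summability forces them all to vanish. You instead split $u$ into its two sublattice branches, pass to Fourier series, and reduce definiteness to the a.e.~invertibility of a $2\times2$ symbol matrix; your bookkeeping of the shifts is correct (both translated difference equations and the matrix $M(k)$ check out, and $\det M(k)=4e^{ik_x}\bigl[\cos^2(k_x/2)+\cos^2(k_y/2)\bigr]$ indeed vanishes only at $(\pi,\pi)$). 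The paper's argument is elementary and self-contained, at the cost of having to spot the real/imaginary decoupling and track the telescoping directions by hand; your symbol computation is more systematic, transfers immediately to the analogous (3+1)D quantity $\|(\tilde C,\tilde D)\|$ of Proposition \ref{Prop3}, and makes explicit that the only obstruction to pointwise coercivity is the single mode at the Brillouin-zone corner $(k_x,k_y)=(\pi,\pi)$ --- which is exactly the non-summable configuration that the paper's telescoping argument excludes. You also dispose of the seminorm properties via the factorization $\|\tilde u\|=\|Lu\|_{\ell^2}$ with $L$ linear, a point the paper leaves implicit.
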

\begin{proof}
Assume $\left\|\tilde{u}\right\|^2=0$, which leads to
\be
u_{j_x+1,j_y}-i u_{j_x+1/2,j_y+1/2} + u_{j_x,j_y}-i u_{j_x+1/2,j_y-1/2} = 0~.\label{unormeq0}
\ee
This is a leap-frog scheme $u_{j_x+1/2,j_y-1/2},u_{j_x,j_y} \rightarrow u_{j_x+1,j_y},u_{j_x+1/2,j_y+1/2}$.
To show that is has no non-trivial solution in $l^2$ we observe that the real part of $u\big|_{\mathbb{Z}^2}$ is only coupled to the imaginary part of $u\big|_{(\mathbb{Z}+1/2)^2}$:
\be
\Re u_{j_x,j_y} + \Im u_{j_x+1/2,j_y-1/2} = -(\Re u_{j_x+1,j_y} + \Im u_{j_x+1/2,j_y+1/2}) = \Re u_{j_x+1,j_y+1} + \Im u_{j_x+3/2,j_y+1/2} = \cdots\label{reim}
\ee
The $l^2$-summabilty of $u$ implies Eq. \eqref{reim}$~=0$ giving
\be
\Re u_{j_x,j_y} =- \Im u_{j_x+1/2,j_y-1/2}~.\label{unormeq1}
\ee
Now we shift \eqref{unormeq0} by $(\frac{1}{2},-\frac{1}{2})$ and take the
imaginary part:
\be
\Im u_{j_x+1/2,j_y-1/2} -  \Re u_{j_x+1,j_y-1} = -(\Im u_{j_x+3/2,j_y-1/2} -  \Re u_{j_x+1,j_y}) = \Im u_{j_x+3/2,j_y+1/2} -  \Re u_{j_x+2,j_y}=  \cdots=0~, \label{imre}
\ee
again due to the $l^2$-summabilty. Combining \eqref{unormeq1} and \eqref{imre}
with their integer grid-shifts yields
\be
\Re u_{j_x,j_y} = - \Im u_{j_x+1/2,j_y-1/2} = \Re u_{j_x+1,j_y-1} = - \Im u_{j_x+3/2,j_y-3/2}=\cdots~.\label{sum2}
\ee
By the $l^2$-summabilty, Eq. \eqref{sum2}$=0$.\\
\\
Analogously, the imaginary part of $u\big|_{\mathbb{Z}^2}$ is only coupled to
the real part of $u\big|_{(\mathbb{Z}+1/2)^2}$. The same arguments as before
lead to
\be
\Im u_{j_x,j_y} =  \Re u_{j_x+1/2,j_y-1/2} = \Im u_{j_x+1,j_y-1} =  \Re
u_{j_x+3/2,j_y-3/2} = \cdots=0~.
\ee
Hence $\tilde{u}=0$.
\end{proof}

Now we can formulate our stability result:
\begin{prop}
Let $r=r_x=r_y=1/\sqrt{2}$ hold in \eqref{scheme2Deq}-\eqref{scheme4thline}. Then this (2+1)D--scheme is stable and satisfies for all time
$$
  \left\|\tilde{u}\right\|^2 + \left\|\tilde{v}\right\|^2 \le 2E^0~.
$$
Here, $E^0$ is the ''energy''
\begin{align}
E := \sum_{j_x,j_y}  |u_{j_x,j_y}|^2 + |v_{j_x+1/2,j_y}|^2 + \frac{1}{\sqrt{2}} \Re \bigg[ \big(u_{j_x+1,j_y}-u_{j_x,j_y}-i u_{j_x+1/2,j_y+1/2}+i u_{j_x+1/2,j_y-1/2}, v_{j_x+1/2,j_y}\big)\bigg]
\end{align} 
of the initial data $u^{-1/2},\,v^0$ (use $r=1/\sqrt{2}$ in Eq. \eqref{E}).
\end{prop}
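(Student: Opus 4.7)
The plan is to combine the conservation of the quadratic functional $E_r$ at the critical value $r=1/\sqrt{2}$, established in the preceding lemma, with a purely algebraic inequality on each time slice:
\[
2\,E_r^{j_o}\;\ge\;\|\tilde u^{j_o+1/2}\|^2 + \|\tilde v^{j_o+1}\|^2.
\]
Since the lemma holds for arbitrary space- and time-dependent $m$ and $V$, once this inequality is in hand the stability bound follows by evaluating at $j_o=0$ on the right-hand side.

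To establish the inequality I would split $E_r$ symmetrically between the two sublattices. Using summation by parts, $\Re(\delta_- u, v) = -\Re(\delta_+ v, u)$, with $\delta_+ = \delta_x + i\delta_y$; splitting this cross term into two equal halves and distributing each $|u|^2$ (respectively $|v|^2$) as four equal quarter-contributions among its four $v$-neighbors (respectively $u$-neighbors) yields $E_r = G_u + G_v$, where $G_u$ is a sum over $v$-positions and $G_v$ its mirror over $u$-positions. At each $v$-position, writing $u_r,u_l,u_t,u_b$ for the four neighboring $u$-values and $P = u_r+u_l$, $P' = u_t+u_b$, $M = u_r-u_l$, $M' = u_t-u_b$, the completion of squares
\[
\tfrac{1}{16}\bigl|(M-iM')+2\sqrt{2}\,v\bigr|^2 \;=\; \tfrac{1}{16}|M-iM'|^2 + \tfrac{1}{2\sqrt{2}}\Re\bigl[(M-iM')\bar v\bigr] + \tfrac{1}{2}|v|^2
\]
works out exactly because $r=1/\sqrt{2}$. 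Combined with the parallelogram-type identities $|M|^2+|M'|^2=|M-iM'|^2+2\Im(M\bar{M'})$ and $|P|^2+|P'|^2=|P-iP'|^2+2\Im(P\bar{P'})$, this extracts from the local summand of $G_u$ the desired contribution $\tfrac{1}{16}|P-iP'|^2$ to $\tfrac12\|\tilde u\|^2$, a manifestly nonnegative square, and an indefinite corner-cross residual $\tfrac14\bigl[\Im(u_r\bar u_t)+\Im(u_l\bar u_b)\bigr]$. An entirely parallel manipulation at each $u$-position, using $\delta_+v$ in place of $\delta_-u$ with the opposite sign, extracts $\tfrac12\|\tilde v\|^2$, a second nonnegative square, and a corresponding $v$-corner residual.

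The main technical obstacle—and what makes the argument succeed precisely at the borderline $r=1/\sqrt{2}$—is disposing of these two corner-cross residuals after summation over the lattice. I expect each to telescope to zero: every diagonal $u$-pair $(u_A,u_B)$ is shared by exactly two $v$-positions, appearing once as $(u_r,u_t)$ at one $v$ and once as $(u_l,u_b)$ at another, contributing $+\Im(u_A\bar u_B)$ and $-\Im(u_A\bar u_B)$ respectively, which cancel; the $v$-corner residual vanishes by the identical pairing argument, with the $\ell^2$-decay underpinning $\|\tilde u\|^2$ (as used in the preceding norm lemma) ensuring no boundary terms survive. Once both residuals are removed, the remaining nonnegative squares deliver $E_r \ge \tfrac12\bigl(\|\tilde u\|^2+\|\tilde v\|^2\bigr)$, and conservation $E_r^{j_o}=E^0$ closes the argument.
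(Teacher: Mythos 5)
Your argument is correct and follows essentially the same route as the paper's own proof: at the critical ratio $r=1/\sqrt{2}$ you rewrite the conserved functional as a sum of nonnegative squares per $v$-site (resp.\ per $u$-site) plus a diagonal residual proportional to $\Im(u_r\bar u_t)+\Im(u_l\bar u_b)$ that cancels pairwise over the staggered lattice, then compare the surviving squares with $\|\tilde u\|^2$ and $\|\tilde v\|^2$. The paper does exactly this, merely grouping the squares as $\tfrac12\big|(u_r-iu_t)/\sqrt2+v\big|^2+\tfrac12\big|(u_l-iu_b)/\sqrt2-v\big|^2$ and invoking $\tfrac14\|a_1+a_2\|^2\le\tfrac12\|a_1+b\|^2+\tfrac12\|a_2-b\|^2$ in place of your single completed square $\tfrac{1}{16}\big|(M-iM')+2\sqrt2\,v\big|^2$.
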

\begin{proof}
We use the same strategy as in the (1+1)D version of the scheme \cite{hammer1D}.
We first define the auxiliary quantity
\begin{align}
\tilde{E} :&= \frac{1}{2} \sum_{j_x,j_y} \bigg|\frac{u_{j_x+1,j_y}-i u_{j_x+1/2,j_y+1/2}}{\sqrt{2}}+v_{j_x+1/2,j_y}\bigg|^2 
+ \frac{1}{2} \sum_{j_x,j_y} \bigg|\frac{u_{j_x,j_y}-i u_{j_x+1/2,j_y-1/2}}{\sqrt{2}}-v_{j_x+1/2,j_y}\bigg|^2\\\nonumber
&= \sum_{j}  |u_{j}|^2 + |v_{j}|^2\\\nonumber
&~~+\frac{1}{2\sqrt{2}} \sum_{j_x,j_y} \bigg[~(u_{j_x+1,j_y}-i u_{j_x+1/2,j_y+1/2})~ v_{j_x+1/2,j_y}^* + (u_{j_x+1,j_y}-i u_{j_x+1/2,j_y+1/2})^* v_{j_x+1/2,j_y}\\\nonumber
&\qquad\qquad\qquad-(u_{j_x,j_y}-i u_{j_x+1/2,j_y-1/2})~ v_{j_x+1/2,j_y}^* - (u_{j_x,j_y}-i u_{j_x+1/2,j_y-1/2})^* v_{j_x+1/2,j_y}\bigg]\\\nonumber
&~~+\frac{i}{4} \sum_{j_x,j_y} \bigg[u_{j_x+1,j_y}u_{j_x+1/2,j_y+1/2}^*
- u_{j_x+1,j_y}^* u_{j_x+1/2,j_y+1/2}\\\nonumber
&\qquad\qquad\quad+ u_{j_x,j_y} u_{j_x+1/2,j_y-1/2}^*
- u_{j_x,j_y}^* u_{j_x+1/2,j_y-1/2}\bigg]~,
\end{align}
where the last summation equals zero because the terms cancel, e.g. the first term with $j_x=j_y=0$ cancels with the last one having $j_x=j_y=1/2$. 
Thus, one gets $\tilde{E}=E$.\\
\\
Using $\frac{1}{4}\left\|a_1 + a_2\right\|^2 \leq \frac{1}{4} \big(\left\|a_1 + b\right\|+\left\|a_2 - b\right\|\big)^2\leq \frac{1}{2} \left\|a_1 + b\right\|^2 + \frac{1}{2} \left\|a_2 - b\right\|^2~$ gives
$\left\|\tilde{u}\right\|^2 \leq \tilde{E} = E^0.$
The symmetry of the scheme and of $E$ in $u$ and $v$ yields also: $\left\|\tilde{v}\right\|^2\leq \tilde{E} = E^0$.
\end{proof}
%

%
\section{Gauge-Invariant Introduction of a Vector Potential}\label{sect4}

In the previous exposition, only the presence of a scalar potential has been considered.  The treatment of a Dirac fermion under the influence of a general electromagnetic field, however,  calls for the use of the electromagnetic four-vector potential, {\it i.e.}  both scalar and vector potential.  
The incorporation of the vector potential ${\bf {\cal A}}(x,y,z,t)$ poses no problem for the two schemes presented above.  It can be accomplished in the same fashion as discussed recently in conjunction with our (1+1)D and (2+1)D schemes, following the standard Peierls substitution for the spinor components $\psi$ on the lattice \cite{peierls,graf,hammer1D,hammer2D2cone},
\begin{eqnarray}
\psi^{t_j,z_j}_{x_j,y_j} \rightarrow  {\hat \psi}^{t_j,z_j}_{x_j,y_j} := \psi^{t_j,z_j}_{x_j,y_j} \exp\{ -i  a^{t_j,z_j}_{x_j,y_j}\} ~.
\label{peierlss}
\end{eqnarray}
Here $\psi$ denotes a spinor component and  the 
real phase $a^{t_j,z_j}_{x_j,y_j}$ is defined as the line integral of the  vector potential ${\bf {\cal A}}$, starting at an arbitrary, but fixed position $(x_o,y_o, z_o)$ and ending on the lattice point 
$(x_j,y_j,z_j)$,
$$
a^{t_j,z_j}_{x_j,y_j}= \frac{q}{\hbar c}\int_{(x_o,y_o,z_o)}^{(x,y,z)} d{\bf s}\cdot {\bf {\cal A}}({\bf s},t)\mid_{x=x_j, y=y_j,z=z_j, t=t_j}~.
$$
$q$ is the fermion electric charge.   Here the grid notation for the more general (3+1)D case has been adopted.  The (2+1)D case can be handled analogously, as discussed in detail for our earlier (2+1)D scheme \cite{hammer2D2cone}.

The implications of the insertion of the Peierls phase-shifted wave functions into the respective numerical schemes  can be summarized as follows \cite{hammer2D2cone}:

\begin{itemize}

\item Since the connectivity of grid points is determined by the structure of the $\alpha-$ or, respectively,  the $\sigma-$ matrices,  requirements on the grid structure are not affected. 

\item Since every spinor component is multiplied merely  by a phase factor, all stability and convergence estimates, as well as the definition of the norm,  can be carried over simply by replacing spinor components by their Peierls transformed $\psi \rightarrow  {\hat\psi}$.

\item The effect of the substitution on difference quotients (derivative terms)

\begin{equation}
\frac{\psi_1-\psi_2}{\Delta} \rightarrow \frac{e^{-ia_1}\psi_1-e^{-ia_2}\psi_2}{\Delta} ~. \nonumber
\end{equation}
can be implemented 
by using the product rule for ``differentiation on the lattice"
\begin{equation}
\frac{e^{-ia_1}\psi_1-e^{-ia_2}\psi_2}{\Delta} =f_+(a_1,a_2)\frac{\psi_1-\psi_2}{\Delta}+\frac{e^{-ia_1}-e^{-ia_2}}{\Delta}\frac{\psi_1+\psi_2}{2}~, \nonumber
\end{equation}
with the definition  $f_\pm(a_1,a_2):=(e^{-ia_1}\pm e^{-ia_2})/2$.
For the case of slow variation of the vector potential over the grid, the  difference quotient for the exponential 
may be approximated by  using the ``chain rule" for the derivative on the grid 
\begin{equation}
\frac{e^{-ia_1}-e^{-ia_2}}{\Delta}= f_+(a_1,a_2)\frac{i(a_2-a_1)}{\Delta} + \frac{1}{\Delta}O((a_1-a_2)^3)~.  \label{chainappx}
\end{equation}

\item  All symmetric averages of the structure
\begin{equation}
\frac{\psi_1+\psi_2}{2} \rightarrow \frac{e^{-ia_1}\psi_1+e^{-ia_2}\psi_2}{2}~. \nonumber
\end{equation}
may be rewritten into
\begin{equation}
\frac{e^{-ia_1}\psi_1+e^{-ia_2}\psi_2}{2}=  f_+(a_1,a_2)\frac{\psi_1+\psi_2}{2} + f_-(a_1,a_2)\frac{\psi_1-\psi_2}{2} \approx  f_+(a_1,a_2)\frac{\psi_1+\psi_2}{2} ~, 
\label{avapp} 
\end{equation}\\
whereby the $f_-$ term may be neglected for vector potentials which are sufficiently smooth on the grid.  
\end{itemize}
~\\
For such smooth vector potentials, approximations Eqs. \eqref{chainappx} and \eqref{avapp} lead us to the following simple rules:
\begin{itemize} 

\item Every partial time-derivative  renormalizes the associated scalar potential term (which occurs in the same equation) according to 

$$
V^{t_j(,z_j)}_{x_j,y_j} \rightarrow V^{t_j(,z_j)}_{x_j,y_j}  -\frac{q}{c}\int_{x_o,y_o(,z_o)}^{x_j,y_j(,z_j)}d{\bf s} \cdot \frac{{\bf {\cal A}}({\bf s},t_j+\Delta_t/2) -{\bf{\cal A}}({\bf s},t_j-\Delta_t/2) }{\Delta_t} ~.
$$

\item Every spatial $j$-derivative term transforms according to 
$$
\frac{\psi_{j+1/2}-\psi_{j-1/2}}{\Delta_j} \rightarrow \frac{\psi_{j+1/2}-\psi_{j-1/2}}{\Delta_j}   -\frac{iq}{\hbar c}{\cal A}_j \frac{\psi_{j+1/2}+\psi_{j-1/2}}{2}~.
$$

\end{itemize}
The substitutions into the schemes above are straight-forward \cite{hammer1D,hammer2D2cone}.  Since the resulting expressions at the various stages of approximation are rather lengthy they are not given here in more detail.
\begin{figure}[h!]
\centering
\includegraphics[width=13cm]{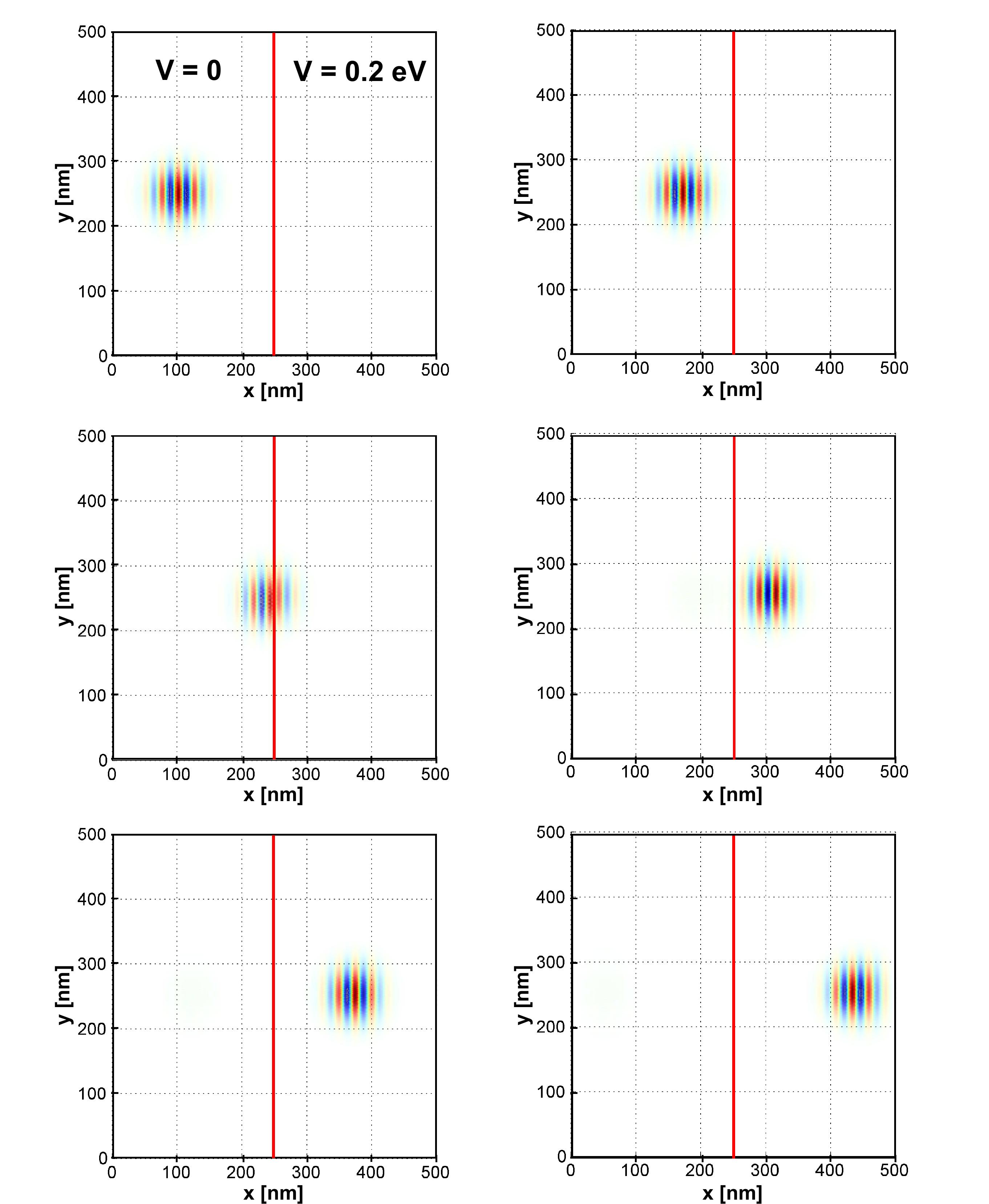}
\caption{Initial Gaussian wave packet impinging on a Klein step in normal direction. 
In the six successive snapshots at $t_n = 0,50,100,150,200,250$ the probability density is shown as brightness saturation and the phase is encoded in the color/brightness variation. The mean energy of the wave packet is chosen to be $E=0.1$ eV and the mass is zero $m=0$. At the left-hand side of the step the potential $V=0$ and at the right-hand side it is $V=0.2$ eV, respectively. At the right side the phase velocity and the group velocity are opposite to one another.}\label{ks}
\end{figure}

\begin{figure}[h!]
\centering
\includegraphics[width=13cm]{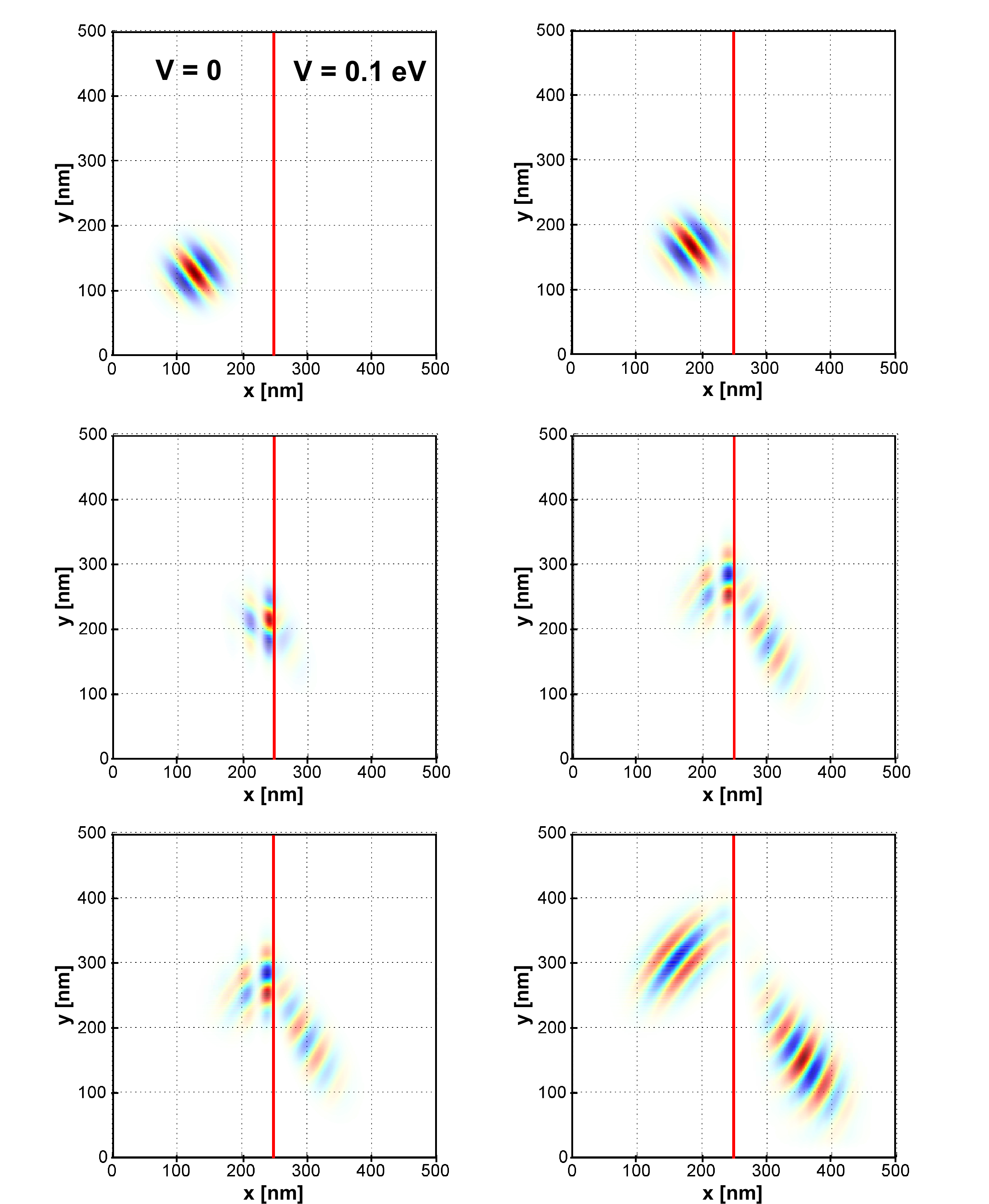}
\caption{Initial Gaussian wave packet impinging on a Klein step at $37^{\circ}$ to the $y$--axis. 
In the six successive snapshots at $t_n = 0,50,100,150,200,250$ the probability density is shown as brightness saturation and the phase is encoded in the color/brightness variation. The mean energy of the wave packet is chosen to be $E=0.05$ eV and the mass is zero $m=0$. At the left-hand side of the step the potential $V=0$ and at the right-hand side it is $V=0.1$ eV, respectively. At the right side the phase velocity and the group velocity are opposite to one another, leading to a negative angle of refraction.}\label{ksa}
\end{figure}

\section{Simulation of  wave packet scattering at a Klein step}\label{sect5}

Although the (2+1)D scheme has been used  in the simulation of Dirac fermions in electromagnetic textures, here we give an example for illustration \cite{hammerAPL,hammerDW,hammerDISS}.  In particular we consider the scattering of a Gaussian wave packet at a Klein step.  
The latter consists of a step of the  scalar  potential $V$ from zero in the left half-space of Figs. \ref{ks} and \ref{ksa} to a finite value $V>0$ in the right half-space.  We consider the fermion mass $m=0$ and mean energies $E>0$ of the wave packet of less than the step hight.  This causes a resonant energy overlap of  electronic states in the left half-space with hole-states in the right half-space.  In this energy window there is a flip in the sign of the phase velocity  $v_{ph}=\frac{\omega}{k}$ relative to the group velocity  $v_{g}=\frac{\partial\omega}{\partial k}$ ($\omega=E/\hbar$).  In Fig.  \ref{ks} we show an example for normal incidence of the wave packet onto the potential step, while Fig.  \ref{ksa} shows snapshots of the propagation for incidence under  $37^{\circ}$ to the $y$--axis.  
The simulation shows the nearly complete penetration of the wave packet into and through  the barrier under normal incidence (Klein paradox) due to the choice of parameters.   For general angle of incidence one has 
both 
a reflected and transmitted wave packet whereby the latter displays a negative index of refraction, as evident in Fig.  \ref{ksa} and is the characteristic of a meta-material.    We note the faithful representation of the propagation direction of both the wave packet and the phase (indicated in color) due to the absence of spurious additional cones in spite of the presence of an abrupt potential step.  Absorbing boundary layers have been used to suppress spurious backscattering from the simulation boundaries \cite{hammerDW,hammerDISS}.
\section{Summary and Conclusions}\label{suco}

We have presented a real-space finite-difference method for both the four-component (3+1)D  and  the two-component (2+1)D Dirac equation. In both cases, the scheme is capable of handling general 
time-dependent electromagnetic potential and mass terms. Stability is proven for this general case and an exactly conserved functional is identified.  It can be interpreted as a probability density. 

The proposed schemes share and combine several decisive advantages over previous methods.  
Most importantly, the fermion doubling problem is avoided by staggering the spinor components symmetrically both in space and time. The strictly monotonic dispersion relation features a single Dirac cone. The scheme is second order accurate and, because it is explicit, shows an optimal linear scaling behavior.  It lends itself to a convenient introduction of absorbing boundary conditions via boundary regions with imaginary scalar potential contributions, following a strategy discussed elsewhere \cite{hammer2D2cone}.  Furthermore, this scheme which we have developed explicitly for one, two, and three spatial dimensions can be extended to higher dimensions.  

The code readily is parallelized. This can be achieved, for example,  by dividing the problem into spatial sub-domains, each computed on one CPU and communicating only the boundary grid-points. 
Although Maxwell's equations have a different structure, the method presented here for the Dirac equation has much in common with the finite difference time domain (FDTD) method.   Today it is applied with great success in computational electromagnetics \cite{yee,taflove}. This close relationship also allows for the use of tools, initially developed for the FDTD time domain method, like the perfectly matched layer (PML), in its various variants,  acting as an absorbing (open) boundary for finite-domain simulations \cite{berenger,johnson}.\\

Since the presented schemes allow for the incorporation of the full electromagnetic four-vector potential, self-consistent dynamics (e.g.,  self-energy corrections) mediated by the electromagnetic interaction is readily 
achieved by solving the electromagnetic potential equations in parallel.  This allows the utilization of all the advantages associated with the present approach and eliminates the need for the discretization of non-linear extensions of the Dirac equation.  On the other hand, the present method may prove to be useful also for the discretization of such nonlinear Dirac equations, in particular, where nonlinearities arise from other types of interaction \cite{xu:2013}.

 In summary, the combination of these favorable properties makes this approach highly suitable for the numerical treatment of Dirac fermion dynamics in space- 
and time-dependent external fields.  As such, it should be useful to a variety of fields in physics, ranging from elementary particle, atomic, molecular, to condensed matter and astrophysics.

~\\

\noindent{\bf Acknowledgment:}  This work has been supported by the Austrian Science Foundation under Project No. I395-N16.\\

\bigskip

%

\begin{appendix}
\section{Stability proof for the (3+1)D leap-frog staggered-grid scheme\label{A}}
\noindent In this appendix we provide the proof of Proposition \ref{Prop3}. To this end we shall need the conserved functional \eqref{EPM} for the case $r=r_x=r_y=r_z= 1/\sqrt{3}$:
\begin{align}
E =& \sum_{j}  |A_{j}|^2 + |B_{j+(\frac{1}{2},0,\frac{1}{2})}|^2 + |C_{j+(0,0,\frac{1}{2})}|^2 + |D_{j+(\frac{1}{2},0,0)}|^2\\\nonumber
&- \frac{1}{\sqrt{3}} \Re \bigg[\Big(C_{j+(0,0,\frac{1}{2})}-C_{j-(0,0,\frac{1}{2})}+D_{j+(\frac{1}{2},0,0)}-D_{j-(\frac{1}{2},0,0)} - i D_{j+(0,\frac{1}{2},0)} + i D_{j-(0,\frac{1}{2},0)},A_j
\Big)\bigg]\\\nonumber
&- \frac{1}{\sqrt{3}}
\Re\bigg[\Big(-D_{j+(\frac{1}{2},0,1)}+D_{j+(\frac{1}{2},0,0)}+C_{j+(1,0,\frac{1
}{2})}-C_{j+(0,0,\frac{1}{2})} + i C_{j+(\frac{1}{2},\frac{1}{2},\frac{1}{2})} -
i C_{j+(\frac{1}{2},-\frac{1}{2},\frac{1}{2})},B_{j+(\frac{1}{2},0,\frac{1}{2})}
\Big)\bigg]~,
\end{align} 
with $j\in \mathbb{Z}^3 \cup (\mathbb{Z}+1/2)^2\times\mathbb{Z}$ in the sum.
\begin{proof}[Proof of Proposition \ref{Prop3}]
Let us first define the auxiliary quantity
\begin{align}
\tilde{E} :&= \frac{1}{2} \sum_{j} \bigg|\frac{
C_{j+(0,0,\frac{1}{2})}+D_{j+(\frac{1}{2},0,0)}- i D_{j+(0,\frac{1}{2},0)}}{\sqrt{3}}-A_j\bigg|^2\\\nonumber %
&+ \frac{1}{2} \sum_{j} \bigg|\frac{
C_{j-(0,0,\frac{1}{2})}+D_{j-(\frac{1}{2},0,0)}- i D_{j-(0,\frac{1}{2},0)}}{\sqrt{3}}+A_j\bigg|^2\\\nonumber %
&+ \frac{1}{2} \sum_{j} \bigg|\frac{
D_{j+(\frac{1}{2},0,1)}-C_{j+(1,0,\frac{1}{2})}- i C_{j+(\frac{1}{2},\frac{1}{2},\frac{1}{2})}}{\sqrt{3}}+B_{j+(\frac{1}{2},0,\frac{1}{2})}\bigg|^2\\\nonumber %
&+ \frac{1}{2} \sum_{j} \bigg|\frac{
D_{j+(\frac{1}{2},0,0)}-C_{j+(0,0,\frac{1}{2})}- i C_{j+(\frac{1}{2},-\frac{1}{2},\frac{1}{2})}}{\sqrt{3}}-B_{j+(\frac{1}{2},0,\frac{1}{2})}\bigg|^2~.
\end{align}
It satisfies
\begin{align}
\tilde{E} &=\sum_{j}  |A_{j}|^2 + |B_{j+(\frac{1}{2},0,\frac{1}{2})}|^2 + |C_{j+(0,0,\frac{1}{2})}|^2 + |D_{j+(\frac{1}{2},0,0)}|^2\\\nonumber
&~~+\frac{1}{2\sqrt{3}} \sum_{j} \bigg[
-\big(C_{j+(0,0,\frac{1}{2})}+D_{j+(\frac{1}{2},0,0)}- i
D_{j+(0,\frac{1}{2},0)}\big)A_j^*\\\nonumber 
&\qquad\qquad\quad~-\big(C_{j+(0,0,\frac{1}{2})}+D_{j+(\frac{1}{2},0,0)}- i D_{j+(0,\frac{1}{2},0)}\big)^*A_j\\\nonumber 
&\qquad\qquad\quad~+\big(C_{j-(0,0,\frac{1}{2})}+D_{j-(\frac{1}{2},0,0)}- i D_{j-(0,\frac{1}{2},0)}\big)A_j^*\\\nonumber 
&\qquad\qquad\quad~+\big(C_{j-(0,0,\frac{1}{2})}+D_{j-(\frac{1}{2},0,0)}- i D_{j-(0,\frac{1}{2},0)}\big)^*A_j\\\nonumber
&\qquad\qquad\quad~+\big(D_{j+(\frac{1}{2},0,1)}-C_{j+(1,0,\frac{1}{2})}- i C_{j+(\frac{1}{2},\frac{1}{2},\frac{1}{2})})B_{j+(\frac{1}{2},0,\frac{1}{2})}^*\\\nonumber
&\qquad\qquad\quad~+\big(D_{j+(\frac{1}{2},0,1)}-C_{j+(1,0,\frac{1}{2})}- i C_{j+(\frac{1}{2},\frac{1}{2},\frac{1}{2})})^*B_{j+(\frac{1}{2},0,\frac{1}{2})}\\\nonumber
&\qquad\qquad\quad~-\big(D_{j+(\frac{1}{2},0,0)}-C_{j+(0,0,\frac{1}{2})}- i
C_{j+(\frac{1}{2},-\frac{1}{2},\frac{1}{2})})B_{j+(\frac{1}{2},0,\frac{1}{2})}
^*\\\nonumber
&\qquad\qquad\quad~-\big(D_{j+(\frac{1}{2},0,0)}-C_{j+(0,0,\frac{1}{2})}- i
C_{j+(\frac{1}{2},-\frac{1}{2},\frac{1}{2})})^*B_{j+(\frac{1}{2},0,\frac{1}{2})}
\bigg]\\\nonumber
&~~+\frac{1}{6} \sum_{j} \bigg[
C_{j+(0,0,\frac{1}{2})}\big(D_{j+(\frac{1}{2},0,0)}^*+i D_{j+(0,\frac{1}{2},0)}^*\big)
+C_{j+(0,0,\frac{1}{2})}^*\big(D_{j+(\frac{1}{2},0,0)}-i
D_{j+(0,\frac{1}{2},0)}\big)\\\nonumber
& \qquad\qquad
+C_{j-(0,0,\frac{1}{2})}\big(D_{j-(\frac{1}{2},0,0)}^*+i D_{j-(0,\frac{1}{2},0)}^*\big)
+C_{j-(0,0,\frac{1}{2})}^*\big(D_{j-(\frac{1}{2},0,0)}-i
D_{j-(0,\frac{1}{2},0)}\big)\\\nonumber
& \qquad\qquad
+D_{j+(\frac{1}{2},0,1)}\big(-C_{j+(1,0,\frac{1}{2})}^* + i C_{j+(\frac{1}{2},\frac{1}{2},\frac{1}{2})}^*\big)
+D_{j+(\frac{1}{2},0,1)}^*\big(-C_{j+(1,0,\frac{1}{2})} - i
C_{j+(\frac{1}{2},\frac{1}{2},\frac{1}{2})}\big)\\\nonumber
& \qquad\qquad
+D_{j+(\frac{1}{2},0,0)}\big(-C_{j+(0,0,\frac{1}{2})}^* + i C_{j+(\frac{1}{2},-\frac{1}{2},\frac{1}{2})}^*\big)
+D_{j+(\frac{1}{2},0,0)}^*\big(-C_{j+(0,0,\frac{1}{2})} - i
C_{j+(\frac{1}{2},-\frac{1}{2},\frac{1}{2})}\big)\bigg]\\\nonumber
&~~+\frac{i}{6} \sum_{j} \bigg[
D_{j+(\frac{1}{2},0,0)} D_{j+(0,\frac{1}{2},0)}^*-D_{j+(\frac{1}{2},0,0)}^* D_{j+(0,\frac{1}{2},0)}
+D_{j-(\frac{1}{2},0,0)} D_{j-(0,\frac{1}{2},0)}^*-D_{j-(\frac{1}{2},0,0)}^* D_{j-(0,\frac{1}{2},0)}\bigg]\\\nonumber
&~~+\frac{i}{6} \sum_{j} \bigg[
-C_{j+(1,0,\frac{1}{2})} C_{j+(\frac{1}{2},\frac{1}{2},\frac{1}{2})}^*
+C_{j+(1,0,\frac{1}{2})}^* C_{j+(\frac{1}{2},\frac{1}{2},\frac{1}{2})}
-C_{j+(0,0,\frac{1}{2})} C_{j+(\frac{1}{2},-\frac{1}{2},\frac{1}{2})}^*
+C_{j+(0,0,\frac{1}{2})}^* C_{j+(\frac{1}{2},-\frac{1}{2},\frac{1}{2})}\bigg]\\\nonumber
&=E~,
\end{align}
since the last and the penultimate sum vanish because they are telescopic sums:
In both sums the first (resp.~second) term with $j=(0,0,0)$ cancels the
forth (resp.~third) term with $j=(\frac{1}{2},\frac{1}{2},0)$. In the
antepenultimate sum there occurs direct cancelation in two pairs of terms
(involving $C_{j+(0,0,\frac{1}{2})} D_{j+(\frac{1}{2},0,0)}^*$ and
$C_{j+(0,0,\frac{1}{2})}^* D_{j+(\frac{1}{2},0,0)}$) and the remaining terms
are again telescopic sums.\\
\\
Using $\frac{1}{4}\left\|a_1 + a_2\right\|^2 \leq \frac{1}{4} \big(\left\|a_1 + b\right\|+\left\|a_2 - b\right\|\big)^2\leq \frac{1}{2} \left\|a_1 + b\right\|^2 + \frac{1}{2} \left\|a_2 - b\right\|^2~$ 
gives for the grid-averaged norm (defined in \eqref{CD-norm})
$$
  \|(\tilde{C},\tilde{D})\|^2\leq\tilde{E} = E~.
$$
The symmetry of the scheme and of $E$ w.r.t.~$(C,D)$ and $(A,B)$ yields also:
$\|(\tilde{A},\tilde{B})\|^2\leq\tilde{E} = E$.
%
%
\end{proof}
\end{appendix}


\bigskip  
\bigskip  
\section*{References}

\end{document}